%
%
%
%

\documentclass[smallcondensed]{svjour3}                     

\smartqed  
\usepackage{graphicx}

%
%
%

\usepackage{multirow}
\usepackage{cite}      
\usepackage{subfigure} 
\usepackage{amsmath}   
\usepackage{amssymb}
\usepackage{array}
\usepackage{amsfonts}
\usepackage{algorithm}
\usepackage{algorithmic}


\newcommand{\comment}[1]{}

\newtheorem{thm}{Theorem}

\newtheorem{lem}{Lemma}

%
%

\begin{document}

\title{In-Network Outlier Detection in Wireless Sensor Networks}

\author{Joel W. Branch \and 
Chris Giannella \and 
Boleslaw Szymanski \and 
Ran Wolff \and 
Hillol Kargupta}

\institute{J. W. Branch \at 
Network Management Research Department, IBM T.J. Watson Research Center, Hawthorne, NY 10532, USA, \email{branchj@us.ibm.com}
\and
C. Giannella \at
The MITRE Corporation, 300 Sentinel Dr Ste 600, Annapolis Junction, MD 20701 USA, \email{cgiannel@acm.org}
\and
B. Szymanski \at 
Department of Computer Science, Rensselaer Polytechnic Institute, Troy, NY 12180, USA, \email{szymansk@cs.rpi.edu}
\and
R. Wolff \at
Department of Management Information Systems, University of Haifa, Haifa, Israel, \email{rwolff@mis.haifa.ac.il}
\and 
H. Kargupta \at 
Department of Computer Science and Electrical Engineering, University of Maryland, Baltimore County, Baltimore, MD 21250, USA, \email{hillol@cs.umbc.edu} \\ 
Also affiliated with AGNIK, LLC, USA
}

\maketitle

\begin{abstract}

To address the problem of unsupervised outlier detection in wireless sensor 
networks, we develop an approach that (1) is flexible with respect to the outlier 
definition, (2) computes the result in-network to reduce both bandwidth and energy usage,
(3) only uses single hop communication thus permitting very simple node failure detection
and message reliability assurance mechanisms (e.g., carrier-sense), and (4) seamlessly 
accommodates dynamic updates to data.  We examine performance using 
simulation with real sensor data streams. Our results demonstrate that our approach is accurate and 
imposes a reasonable communication load and level of
power consumption.

\keywords{Outlier detection \and Wireless sensor networks}

\end{abstract}


\section{Introduction}

Outlier detection, an essential step preceding most any data 
analysis routine,  
is used either to suppress or amplify outliers.
The first usage (also known as data cleansing) improves robustness of data
analysis. The second usage helps in searching for rare patterns in such domains 
as fraud analysis,
intrusion detection, and web purchase analysis (among others).

Several factors make wireless sensor networks (WSNs) especially prone to 
outliers. First, they collect their data from the real world using imperfect
sensing devices. Second, they are battery powered and thus their performance
tends to deteriorate as power dwindles. Third, since these networks
may include a large number of sensors, the chance of error accumulates.
Finally, in their usage for security and military purposes, sensors are
especially prone to manipulation by adversaries. Hence, it is clear that
outlier detection should be an inseparable part of any data processing routine
that takes place in WSNs.

Simply put, outliers are events with extremely small probabilities of occurrence.
Since the actual generating distribution of the data is usually unknown, 
direct computation of probabilities is difficult. Hence, outlier detection 
methods are, by and large, heuristics. Because
the problem is fundamental, a huge variety of outlier detection methods have been developed. In this paper we focus on non-parametric, unsupervised methods. 
A simplistic implementation of these methods would require centralization of the data. Such centralization
is hard and costly in WSNs as it demands high bandwidth and requires reliable message transmission over multiple hops, 
which is both costly and difficult to implement.

We developed a technique for the computation of outliers in WSNs. 
This technique (1) is flexible with respect to the outlier
definition, (2) computes the result in-network to reduce both bandwidth and energy 
usage \cite{GuptaKumar},
(3) only uses single hop communication thus permitting very simple node failure detection
and message reliability assurance mechanisms (e.g., carrier-sense), and (4) seamlessly 
accommodate
dynamic updates to data. In addition to these essential features, the algorithm  
presented here also has two highly desirable properties: it is  generic -- suitable for many outliers 
detection heuristics and its communication load is proportional to the outcome ({\em i.e.} the number of outliers 
reported).

We exemplify the benefits of our algorithm by implementing it using two
different outlier detection heuristics and simulating 53 sensors using the
SENSE sensor network simulator \cite{sense:2004} with real sensor data streams.
Our results show that the algorithm converges to an accurate result with reasonable communication load and power consumption. In most tested cases, our algorithm's 
performance bests that of a centralized approach.

\section{Motivating Application}

The potential importance of efficient outlier detection in wireless
sensor networks is best understood in the context of popular applications
of those systems. Consider, for instance, the acoustic source localization
problem. In this problem, a set of synchronized sensors all register the
arrival of a specific sound at a certain time. Given the distance of
two sensors from one another and the time difference of arrival (TDOA) of the
sound, the potential locations of the source vis-a-vis the two sensors can be 
deduced. Given data from several sensors, the possible relative locations (each 
a hyperbola in the plane) can be intersected, and the location of the source 
can be
pinpointed (see, for example \cite{Ajdler04sourceLocalization,CounterSniper} and Fig. \ref{fig:tdoa-sl}).

\begin{figure}[t]

\begin{center}

\includegraphics[width=4in]{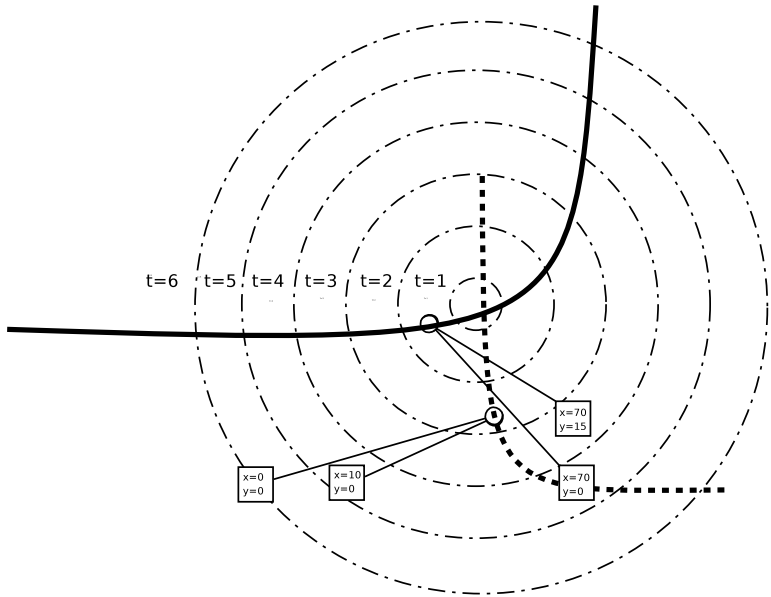}

\caption{The expansion of a sound over time and the possible source location as computed by two different pairs
of sensors according to the time difference of arrival. The origin of the sound lays in the intersection
of the two hyperbola.}
\label{fig:tdoa-sl}
\end{center}

\end{figure}

While the theoretic framework of TDOA based source location is simple and clear,
the problem becomes much more complex in reality. Firstly, the real terrain in
which the problem occurs is rarely flat, or an unobstructed three-dimensional space. 
Secondly, echos and multiple concurrent
sounds may add many possible hyperbolas from which the relevant ones need be selected.
Last, and perhaps most importantly, the method is sensitive to erroneous initializations
in terms of sensor synchronization and positioning, as well as to the possible degradation
of these factors when sensors' power dwindles. All these factors amount to a multiplicity
of possible hyperbolas, only few of which intersect at the correct location of the source.

In fact the similar principal of localization applies in a broader setting,
in which imprecise detection by a single sensor, regardless
of its modality (e.g., acoustic, seismic, visual, electromagentic, etc.) is
applied in so called {\it binary sensing} object location~\cite{WBS09} in which 
a group of neighboring nodes cooperate to narrow the object location. In fact,
a detection, true of false, of object presence in a sensing range of sensor
will trigger a tracking algorithm or even the entire tracking service~\cite{CJ-SOA09}.
Hence, to avoid the costs associated with unnecessary execution of the tracking 
algorithm or service, wrong data (whatever the cause)
must be detected and removed. There are ample methods in which such detection and removal 
can be carried out ({\em e.g.} Maximum Likelihood \cite{ML-WSN}). 
However, they all rely on centralization of all of the data for processing.
Such centralization would likely be unacceptably costly in wireless sensor networks
for two main reasons.  First, because a huge portion of the energy of a sensor would 
be spent on relaying data of other sensors. Second, because naive centralization would 
make no use of old data when new data arrives, even if the data changes  only slightly. 
For instance, if a certain sensor 
produces unwanted signals (say, due to a local noise source), that sensor \emph{and 
every sensor relaying its data to the center} would constantly waste energy on 
centralization of the data even though it might clearly be undesired.

It is therefore of high importance to be able to perform data cleansing in the network
concurrent to any decision protocol. With the method suggested in this paper, 
sensors can constantly and efficiently prune away data which seems false. Only then, 
and only if the remaining data seems to require further analysis, would the more 
complex and costly procedure for source localization be executed. In this way, much 
energy can be saved and system lifetime can be extended. 

This paper presents an efficient algorithm  for  in-network outlier detection. 
The algorithm 
is generic, permitting several definitions of an outlier. The experimentation
and simulation results are presented for this aglorithm and not for the entire
motivating example because source localization and tracking using wireless sensor 
networks is well understood (e.g., see~\cite{beck2008exact}).


\section{Related work}

\subsection{Outlier detection}

Outlier detection is a long studied problem in data analysis, 
we provide only a brief sampling of the field.
Hodge and Austin \cite{HA:2004} present a survey focusing on
outlier detection methodologies based on machine learning and data mining. These include
distance and density-based unsupervised methods, feed-forward 
neural networks and decision tree-based supervised methods, and 
auto-associative neural network and Hopfield network-based methods. 
Barnett and Lewis \cite{BL:1994} provide a
survey of outlier detection methodologies in the statistics community.

Our algorithm is flexible in that it accommodates a whole class of unsupervised
outlier detection techniques such as (1) distance to $k^{th}$ nearest neighbor
\cite{RRS:2000}, \cite{BS:2003}, (2) average distance to the $k$ nearest neighbors \cite{AP:2002}, \cite{BS:2003}, and
(3) the inverse of the number of neighbors within a distance $\alpha$ \cite{KN:1998}
(see Section \ref{sec:prelim} for details).

\subsection{Wireless sensor networks}

WSNs combine the capability to sense, compute, and coordinate their activities with
the ability to communicate results to the outside world. They are 
revolutionizing data collection in all kinds of environments. At the same
time, the design and deployment of these networks creates unique research and
engineering challenges due to their expected large size (up to thousands of
sensor nodes), their often random and hazardous deployment, obstacles to their
communication, their limited power supply, and their high failure rate. 

The software for WSNs needs to be aware of their limitations and 
features. The most important among these are limited power, high
communication cost, and limited direct communication range. In 
\cite{mobicom99}, Estrin {\em et al.} introduce scalable coordination
as an important component of the needed software. A survey of the
state-of-the-art in WSNs is given in 
in \cite{computer-networks02}. Another survey~\cite{comm-magazine02} focuses 
on challenges arising from specific applications such as military, 
health care, ecology, and security.

Energy-efficiency, a cardinal WSN requirement, is often achieved by minimizing communication using
topology-control algorithms that dictate the active/sleep cycles of sensor 
nodes. Examples include Geographic Adaptive Fidelity (GAF) 
\cite{mobicom01}, ASCENT \cite{ascent},
STEM \cite{stem},
and ESCORT \cite{icn05}.
While the focus of this paper is on WSN outlier detection,
the challenge is the same as in the above mentioned works. Hence, while we do not propose
a topology-control algorithm, we aim to design an
energy-efficient algorithm by minimizing the required communication overhead.

Other research efforts have also addressed the issue of developing a framework
for distributed outlier detection in WSNs.

The framework of Zhuang {\em et al.} \cite{ZCWL:2007} use a weighted moving average approach 
to smooth noise from the data stream arriving at each sensor.  In addition to temporal information 
(past data values), sensors also use data from neighboring sensors (spatial smoothing) to reduce the
rate at which data values are propagated to the sink.  When an observed data value remains within the
established spatio-temporal trend, it is not propagated.  Their approach differs from ours in that theirs
does not seek to detect outliers. 

The framework of Sheng {\em et al.} \cite{SLMJ:2007} allows the discovery of k-nearest-neighbor
based outliers: points whose distance to their k-nn exceeds a fixed threshold or the top n points
with respect to the distance to their k-nns.  Each sensor maintain a histogram-type summary of pertinent 
information over a sliding window of its data points.  This summary is propagate to a sink node.  The sink 
node collects the summaries and queries the 
network for any additional information needed to correctly determine the outliers over the whole network.
The use of summaries allows their approach to use less communication than a naive, centralized approach.
Their approach differs from ours in several ways.  First, they only detect outliers over one dimensional data.
Indeed, extending their approach to more dimensions is complicated by the fact that compact, multi-dimensional
histograms are difficult to build.  Second, they only consider the two k-nn based outlier definitions described
above.  While our approach encompasses these and more.  Thirdly, their approach only applies in settings where
spatial proximity is unimportant (data from all sensors, near and far, is used in determining outliers).  We have
developed an approach that considers spatial proximity ("semi-local" outlier detection) as well as one that does not.

The framework of Subramaniam {\em et al.} \cite{SPPKG:2006} requires the
sensors to maintain a tree communication topology and computes outliers
based on an estimate of the underlying probability distribution from which the data 
arises.  Such an estimate is computing by each sensor maintaining a random sample of 
its data observations.  Our approach differs in at least four ways.  First, ours
does not make any assumptions about the communication
topology ({\em e.g.} it is a tree), save that it is connected.  Second, ours computes
outliers with respect to all of the data observations at each sensor, not a sample.
Third, ours can smoothly take into account spatial proximity among the
sensors (``semi-local" outliers) while Subramaniam does not focus on this task.  
Fourth, our approach is designed to smoothly adjust to changes in the underlying 
network topology while Subramaniam's requires that the underlying communication tree
be reestablished by other means before their algorithm can resume operation.

The framework of Janakiram {\em et al.} \cite{JRK:2006} is based on a Bayesian
Belief Network (BBN) that has been constructed over the WSN (and distributed to each sensor).
Using this, each sensor can estimate the likelihood of an observed tuple and, therefore, detect
outliers.  However, Janakiram does not discuss the problem of updating the BBN given
network/data change.  It is not clear to what extent the BBN construction phase can by
carried out in-network.  Our approach differs in that it is in-network and designed to
smoothly adjust to changes in data/network.   

The framework of Zhuang and Chen \cite{ZC:2006} uses a wavelet based technique for
correcting large isolated spikes from single sensor data streams. A dynamic time warping (DTW)distance-based 
technique is also used to identify more steady intervals of erroneous sensor data by comparing the data streams of 
spatially close sensors assumed to produce similar
data streams. To reduce energy consumption, anomalous data streams are not transmitted
to the base station. Our method is similar in that it is in-network. However, Zhuang
and Chen's use of DTW is tightly integrated with a minimum hop count routing algorithm,
which makes the approach more restrictive than ours.

Rajasegarar {\em et al.} \cite{RLPB:2006} describe an approach that is based on
distributed non-parametric anomaly detection and requires sensors to maintain
a tree communication network topology. Here each sensor clusters its sampled
measurements using a fixed-width clustering algorithm, then extracts statistics
of the clusters (i.e., the centroid and number of contained data vectors) and
then sends them its parent node. The parent uses its children's' cluster 
statistics to form
a merged cluster and then transmits that cluster to its parent. This process continues
recursively until the base station receives all clusters, after which it will
perform anomaly detection to identify all outliers. While this approach supports
energy-efficiency by distributing the clustering operation throughout the network,
anomaly detection is only performed at the base station. Our approach differs in
that it distributes the anomaly detection process itself throughout the network,
quickly enabling nodes to identify outliers and autonomously make further data
processing decisions. Also, our approach does not rely on the use and maintenance
of a routing tree and hence, is able to smoothly adjusts to changes in the underlying 
network topology.

Adam {\em et al.} \cite{AJA:2004} address the issue of accounting for spatially neighboring
peers when detecting outliers in sensor networks.  However, they assume the sensor datasets
are centralized and the outlier processing is carried out there.  They
do not consider the problem of carrying out the outlier detection {\em in-network}
as we do. 

Palpanas {\em et al.} \cite{PPKG:2003} propose a technique for distributed deviation
detection using a network hierarchy of low and high capacity sensors that are differentiated
with respect to processing power and communication range. Here, low capacity sensors
aim to detect local outliers while high capacity sensors detect more spatially dispersed
outliers using an aggregation of low capacity sensors' data. Kernel density estimators
are used to model the distribution of data values reported by sensors and distance-based
detection techniques are used for identifying outliers. The authors present no formal
evaluation of the proposed technique. Our approach differs in that it
does not rely on a hierarchy of device capabilities.

The framework of Radivojac {\em et al.} \cite{RKSO:2003} addresses the process of
sensors learning data distributions from class-imbalanced data. Here, sensors send data
points to a central base station which is tasked with generating a classification model
from class-imbalanced data (i.e., having an abundant number of negative samples and a
small amount of positives). The model is generated using a neural network classifier,
after which the base station distributes the model to the sensors for detection purposes.
This process repeats throughout the lifetime of the network. A Bayesian classifier is also
employed to extend the lifetime of the network by minimizing
the total cost of detection and classification (e.g., costs of transmitting false-positives
and false-negatives). Again, our framework differs in that it operates in-network as
opposed to a centralized manner.

Our work in this paper is an extension of our preliminary work appearing in conference 
proceedings \cite{BSWGK:2006}.  We have extended our preliminary work by 
providing complete correctness proofs for the global outlier detection algorithm.  
And, we have improved the experimental analysis of the global algorithm.  We have also
added the localized outlier detection algorithm and experimental analysis of it. 

\subsection{Distributed data mining}

Distributed Data Mining (DDM) has recently emerged as an important area of research.
DDM is concerned with analysis of data in distributed environments,
while paying careful attention to issues related to computation, communication, storage, and human-computer
interaction.  Detailed surveys of
Distributed Data Mining algorithms and techniques have been presented in
\cite{Kargupta_97}, \cite{Kargupta_99}, \cite{Kargupta_04}.
Some of the common data-analysis tasks include association rule mining, clustering,
classification, kernel density estimation and so on.

Recently, researchers have started to consider data analysis and data mining in 
large-scale dynamic networks with the goal of developing techniques that are 
highly asynchronous, scalable, and robust to network changes. Efficient data
analysis algorithms often rely on efficient primitives, so
researchers have developed several different approaches to computing basic
operations ({\em e.g.} average, sum, max, or random sampling) on dynamic networks.
Mehyar {\em et al.} \cite{MSPLM:2007} develop an asynchronous, deterministic technique for computing
an average over a large, dynamic network.  Kempe {\em et al.} \cite{kempe03gossip} and Boyd {\em et al.} \cite{BGPS:2005}
investigate gossip based randomized algorithms. Jelasity and Eiben \cite{KJE:2003}
develop the ``newscast model''.  Bawa {\em et al.}
\cite{BGMM:2007} have developed an approach in which similar primitives are
evaluated to within an error margin. Wolff {\em et al.} \cite{Majority-Rule-SMC-B}
develop a local algorithm for majority voting. Datta and Kargupta 
\cite{DK:2007} develop a technique for uniformly sampling data distributed over
a large-scale peer-to-peer network.  Wolff {\em et al.} \cite{WBK:2009}, Sharfman {\em et al.} \cite{SSK:2007},
and Bhaduri {\em et al.} \cite{BK:2008} develop
techniques for threshold monitoring over a large, distributed set of data streams.  Finally, some work has gone into more
complex data mining tasks: association rule mining \cite{Majority-Rule-SMC-B},
facility location \cite{KSW:2007}, decision tree induction \cite{BWGK:2008}, 
classification through meta-learning \cite{LXLS:2007} 
(all four based on local majority voting), genetic
algorithms \cite{CDS:2003}, k-means clustering \cite{DGK:2006} 
\cite{WKK:2006}, web user community formation \cite{DBLK:2008}, hidden 
variable distribution estimation in a wireless sensor network \cite{MK:2008}, 
outlier detection in distributed data streams
\cite{OGP:2006} \cite{SHYZJ:2007}.  The last two papers address a related problem as 
we do: outlier detection over multiple distributed data streams.  However, their work is not designed for
a WSN.  For example, they rely on frequent whole-network broadcasts (Otey) or
information centralization at a leader node (Su) --
arguably reasonable approaches in a wired network, but very costly in a WSN. 
Finally, an overview of the problem of carrying out data mining on data distributed over
a dynamic peer-to-peer network is given in \cite{DBGWK:2006}. 


\section{Preliminaries}
\label{sec:prelim} 

\subsection{Outlier Detection Defined}

Let $\mathbb{D}$ be a data space.  We adopt a commonly used approach in the data mining/machine learning literature such that 
outliers are defined by specifying ranking function, $R$.  This function maps $x \in \mathbb{D}$ and finite $D \subseteq \mathbb{D}$ 
to a non-negative real number $R(x,D)$ indicating the degree to which $x$ can be regarded as an outlier with respect to a 
dataset $D$.  Some common examples of $R$ include (among others): the distance to the $k^{th}$ nearest neighbor 
(\cite{RRS:2000}, \cite{BS:2003}); the average distance to the $k$ nearest neighbors (\cite{AP:2002}, \cite{BS:2003}); and LOF 
(\cite{breunig00lof}).  We assume that a fixed total linear order, $\prec$, on $\mathbb{D}$ is 
used as a tie-breaking mechanism to
ensure that $R(.,Q)$ creates a total linear ordering on $\mathbb{D}$ for any finite $Q \subseteq \mathbb{D}$.  This is equivalent,
for our purposes, to assuming, without loss of generality, that $R(.,Q)$ is one-to-one.   

$R$ is assumed to satisfy the following two axioms. Given $x \in \mathbb{D}$, for all finite
$Q_1 \subseteq Q_2 \subseteq \mathbb{D}$: {\bf anti-monotonicty,} $R(x,Q_1) \geq R(x,Q_2)$; {\bf smoothness,}
if $R(x,Q_1) > R(x,Q_2)$, then there exists $z \in Q_2 \setminus Q_1$, such that $R(x,Q_1) > R(x,Q_1 \cup \{z\})$. 
The anti-monotonicty axiom is similar to the {\em Apriori rule} in frequent itemset 
mining \cite{AMS:1996}. The smoothness axiom, intuitively, states that $R$ changes
gradually. As more points are added to $Q_1$, the rating function changes gradually
to $R(x,Q_2)$.  Of the examples in the previous paragraph, all but LOF satisfies these assumptions, assuming,
as we do, the use of a tie-breaking mechanism as described in the previous paragraph.  
 
Given $n$ a user-defined parameter and a finite dataset $D \subseteq \mathbb{D}$, the
outliers of $D$ are denoted $O_n(D)$ and are defined to be the top $n$ points
in $D$ with respect to $R(.,D)$ (if $|D| < n$, then $O_n(D)$ is defined to be $D$). 

\subsection{Distributed System Set-up}

The distributed system architecture we assume consists of a collection of sensors, $p_i$, each holding 
a finite dataset $D_{i} \subseteq \mathbb{D}$. 
Sensors communicate by exchanging messages to their
immediate neighbors as defined by an undirected graph.  
We assume that messages are reliable, {\em i.e.} a message sender can assume that if a message is not recieved, then
the sender will be informed); and each sensor 
$p_{i}$ can accurately maintain the list
of its immediate neighbors, $\Gamma_{i}$, in the graph.
Our algorithms
work as long as there exists a path, possibly unknown, from each sensor
to every other sensor.  Note that, message reliability is difficult to fully
maintain in a WSN -- some message dropping is expected.
While our algorithm assumes no message dropping, modest violation of this 
assumption in our experiments did not effect accuracy significantly.


\section{Global Distributed Outlier Detection Algorithm}

In this section, we describe a distributed algorithm by which sensors, each
assumed to know $R$ and $n$, compute $O_n(D)$ where $D$ $=$ $\bigcup_{i}D_{i}$
({\em global} outlier detection).  In a wireless sensor network, it can be desirable 
for sensors to find outliers only with respect to the data contained in nearby sensors, rather than
the entire network ({\em semi-global} outlier detection).   
In the next section \ref{sec:localized}, we describe how to modify the global
outlier detection algorithm to act in a semi-global manner. 

At any point in time, $p_i$ keeps track of the data points it has sent to or received from
its neighbor $p_j$ at some past time.  Let $D^i_{i,j}$ denote the set of points sent from $p_i$ to 
$p_j$, and, $D^i_{j,i}$ deonte the set of points sent 
from $p_j$ to $p_i$.  Importantly, $(D^i_{i,j}$ $\cup$ 
$D^i_{j,i})$ denotes the data points that $p_i$ can be sure
are commonly held with $p_j$ (there may be more).  
Let $P_i$ denote 
$D_i \bigcup_{j \in \Gamma_i}D^i_{j,i}$, the set of points $p_i$ is holding
at the current time.\footnote{Note the distinction between $D_i$ and $P_i$.  In words, $D_i$ is the set of points 
that {\em originated} at sensor $p_i$, while $P_i$ is the set of all points that $p_i$ is holding including $D_i$ {\em and} 
those originating at other sensors but propogated to sensor $p_i$ through messaging passing.}  $p_i$ uses $P_i$ to compute 
an estimate of the overall correct answer, $O_{n}(D)$.  This estiamte, henceforth called
$p_i's$ {\em estimate}, is $O_n(P_i)$, the set of outliers based on all the information 
availible to $p_i$ at the current time.  

The algorithm does not assume any special sensors.  Each sensor, $p_i$, asynchronously waits for an 
{\em event} to occur: (i) the algorithm is
initialized, (ii) $D_i$ changes, (iii) a message is received from a neighbor, or (iv) a link goes up/down
causing $p_i's$ immediate neighborhood to change (however, algorithm correctness requires that we assume
the network remain connected). Note that,
events for $p_i$ are entirely local and can be detected without the aid of any other sensors beyond
the immediate neighborhood.  
Once $p_i$ detects an event, it will decide which of
the points it is currently holding ($P_i$), if sent, could cause its neighbor, $p_j$, to 
change its estimate.  $p_i$ then sends these points and adds them to
$D^i_{i,j}$ ($p_i$ carries out this process separately for
all of its neighbors).  

Gradually, the points held by each sensor enlarges until enough overlap 
is obtained so that each sensor's estimate is
the correct answer, $O_{n}(D)$.  This will be gauranteed to occur once
each sensor, individually, decides that none of the points it is currently
holding need be sent to its neighbors.  At this point, the algorithm is
terminated.  To see how all of this works, consider an example.

\subsection{Example}

Let $R$ be the distance to the nearest neighbor and given $x \in \mathbb{D}$ and
finite $P \subseteq \mathbb{D}$, $N(x,P)$ denotes the nearest neighbor of $x$ among points in $P$.
Given finite $Q \subseteq \mathbb{D}$, $N(Q,P)$ denotes $\bigcup_{x \in Q}N(x,P)$.  Let $n=1$
and consider a network
of two sensors, $p_i$ and $p_j$, each initially holding the 
following one-dimentional datasets.  The correct 
answer the algorithm will compute is $O_{n}(D)$ $=$ $\{0.5\}$. 

\begin{itemize}
\item $D_i = \{0.5,3,6,10,11, \ldots, a\}$.
\item $D_j = \{4,5,7,8,9,a+1,a+2, \ldots,a+b\}$.  
\item $D = D_i \cup D_j = \{0.5,3,4,5,6,7,8,9,10,11, \ldots, a, a+1, \ldots, a+b\}.$
\end{itemize}

\noindent Initially, $P_i = D_i$, $P_j = D_j$, and $D^{i}_{i,j}$ 
$=$ $D^{i}_{j,i}$ $=$ $D^{j}_{i,j}$ 
$=$ $D^{j}_{j,i}$ $=$ $\emptyset$.  For simplicity, we will 
describe the 
algorithm in synchronous fashion starting with $p_i$.  But, the ideas 
extend nicely to asynchronous operation. 

\begin{enumerate}
\item $p_i$ computes its estimate as $O_{n}(P_i) = \{6\}$ and then must 
compute the set of its data points that might 
cause $p_j$ to change its estimate if sent.  We call these the {\em
sufficient points} from $P_i$ for sensor $p_j$.  Formally, we define a
set $Z_j \subseteq P_i$ to be sufficient for $p_j$ if 

\begin{equation}
\label{neccessary}
\left[O_{n}(P_i) \cup N(O_{n}(P_i),P_i)\right] \cup \left[N(O_{n}(D^{i}_{i,j}\cup D^{i}_{j,i} \cup Z_j),P_i)\right] \subseteq Z_j.
\end{equation}

\noindent The rationale for the first part is simple.  $O_{n}(P_i)$ is 
necessary for $p_j$, because if $p_i$ is right in its
estimate, then $p_j$ ought to know about it.  Moreover, $p_i$ must
also send $N(O_{n}(P_i),P_i)$, because this allows $p_j$
to determine if any of its points can cause the ranking of the points in   
$O_{n}(P_i)$ to change.  

The rationale for the second part is somewhat
more complicated.  In brief, if $p_i$ were to send $Z \subseteq P_i$ to 
$p_j$, then $p_i$ would also need to send 
$N(O_{n}(D^i_{i,j}\cup D^i_{j,i} \cup Z),P_i)$.
To avoid resending, $p_i$ requires that 
$N(O_{n}(D^i_{i,j}\cup D^i_{j,i} \cup Z),P_i)$ is
contained in $Z$.
To understand the reasoning for all of this, consider that
$(D^i_{i,j}\cup D^i_{j,i} \cup Z)$ is the total set of 
points $p_i$ knows $p_j$ has if $Z$ were to send $p_j$.  Thus,
$O_{n}(D^i_{i,j}\cup D^i_{j,i} \cup Z)$ is $p_i's$ best approximation 
to $p_j's$ estimate if $Z$ were to send $p_j$.
Hence, $p_i$ computes its nearest neighbors to these, because, if
$p_i$ is right in its approximation, it must ensure that $p_j$ 
have these neighbors since they could cause $p_j$ to change its estimate.

\vspace{0.25cm}

Getting back to our example, observe that $O_{n}(P_i)$ $\cup$
$N(O_{n}(P_i),P_i)$ $=$ $\{3,6\}$.  Moreover, 
$N(O_{n}(D^i_{i,j}\cup D^i_{j,i} \cup \{3,6\}),P_i)$
$=$ $N(O_{n}(\{3,6\}),P_i)$ $=$ $\{3\}$ $\subseteq$ $\{3,6\}$.  Hence,
$Z_j$ $=$ $\{3,6\}$ as this satisfies (\ref{neccessary}) 
above.  So, $p_i$ sends 
$\{3,6\} \setminus (D^i_{i,j} \cup D^i_{j,i})$ $=$ 
$\{3,6\}$ and updates $D^i_{i,j}$ to $\{3,6\}$.

\item $p_j$ will receive these points and updates $D^j_{i,j}$ to $\{3,6\}$ (currently 
$D^j_{j,i} = \emptyset$).  So, implicitly, $P_j$ now denotes 
$D_j \cup D^j_{i,j}$ $=$ 
$\{3,4,5,6,7,8,9,a+1,a+2, \ldots,a+b\}$.  $p_j$ computes $O_{n}(P_j)$ $\cup$ $N(O_{n}(P_j),P_j)$ $=$
$\{3,4\}$ (assuming appropriate tie-breaking).  Moreover, it can be seen that
this satisfies (\ref{neccessary}).  So, $p_j$ sends 
$\{3,4\}$ $\setminus$ $(D^j_{i,j} \cup D^j_{j,i})$ $=$ 
$\{3,4\}$ $\setminus$ $\{3,6\}$ $=$ $\{4\}$ and updates  
$D^j_{j,i}$ to $\{4\}$.

\item $p_i$ will receive these points and updates 
$D^i_{j,i}$ to $\{4\}$ (currently 
$D^i_{i,j} = \{3,6\}$).  So, implicitly, $P_i$ now denotes 
$D_i \cup D^i_{j,i}$ $=$ 
$\{0.5,3,4,6,10,11, \ldots, a\}$.  $p_i$ computes $O_{n}(P_i)$ $\cup$ $N(O_{n}(P_i),P_i)$ $=$
$\{0.5,3\}$.  Moreover, it can be seen that
this satisfies (\ref{neccessary}).  So, $p_i$ sends 
$\{0.5,3\}$ $\setminus$ $(D^i_{i,j} \cup D^i_{j,i})$ 
$=$ $\{0.5,3\}$ $\setminus$ $\{3,4,6\}$ $=$ $\{0.5\}$ and updates 
$D^i_{i,j}$ to $\{0.5,3,6\}$.

\item $p_j$ will receive these points and updates 
$D^j_{i,j}$ to $\{0.5,3,6\}$ (currently 
$D^j_{j,i} = \{4\}$).  So, implicitly, $P_j$ now denotes 
$D_j \cup D^j_{j,i}$ $=$ 
$\{0.5,3,4,5,6,7,8,9,a+1,a+2, \ldots,a+b\}$.  $p_j$ computes $O_{n}(P_j)$ $\cup$ $N(O_{n}(P_j),P_j)$ $=$
$\{0.5,3\}$.  Moreover, it can be seen that
this satisfies (\ref{neccessary}).  So, $p_j$ sends 
$\{0.5,3\}$ $\setminus$ $(D^j_{i,j} \cup D^j_{j,i})$ 
$=$ $\{0.5,3\}$ $\setminus$ $\{0.5,3,4,6\}$ $=$ $\emptyset$. {\em i.e.} 
nothing is sent.

At this point the algorithm has terminated, both sensors are waiting
for an event to occur and there are no messages in flight.  $P_i$ and
$P_j$ denote $\{0.5,3,4,6,10,11, \ldots, a\}$
and $\{0.5,3,4,5,6,7,8,9,a+1,a+2, \ldots,a+b\}$, respectively.  Therefore
$O_{n}(P_i)$ $=$ $\{0.5\}$ $=$ $O_{n}(P_j)$, which in turn, equals the
correct answer $O_{n}(D)$.  Observe that the total amount of communication
(data points sent) was 4.  The naive approach which
centralized all the data on either $p_i$ or $p_j$ requires
$\min\{a-6,b+5\}$ communication.  For large $\min\{a,b\}$, the distributed 
algorithm requires 
much less communication.
\end{enumerate}

\subsection{The Algorithm}

To translate the previous example into a formal algorithm for general $R$ (satisfying
the anti-monotonicity and smoothness axioms), we must provide some definitions generalizing
the role of $N(.,.)$. 
Given $x \in \mathbb{D}$ and finite $P \subseteq \mathbb{D}$, $Q_1 \subseteq P$ is called a {\em support set of} 
$x \in \mathbb{D}$ {\em over} $P$ if $R(x,P) = R(x,Q_1)$.  Intuitively, all other points from $P$ can be discarded without
affecting the rank of $x$.  Using cardnality and the tie-breaking mechanism discussed earlier ($\prec$ a total linear order on $\mathbb{D}$), 
we can define a unique, smallest support set of $x$ with respect to $P$, denoted  $[P|x]$.\footnote{Formally, given
$Q_1, Q_2 \subseteq P$ support sets of $x$ with respect to $P$, $Q_1$ is smaller than $Q_2$ if $|Q_1| < |Q_2|$ or ($|Q_1| = |Q_2|$ and $Q_1$
is lexicographically smaller than $Q_2$ with respect to $\prec$).}
Given $Q \subseteq P$, let $[P|Q]$ denote $\bigcup_{x\in Q}[P|x]$.  In the previous example, $R$ was the distance to the nearest 
neighbor, so, $[P|x]$ equaled $N(x,P)$ using $\prec$ to break ties.  

With these more general definitions, we define a set $Z_j \subseteq P_i$ to be sufficient for $p_j$ if 

\begin{equation}
\label{eq:fixed-point}
\left(O_{n}(P_i) \cup [P_i|O_{n}(P_i)]\right) \cup \left([P_i|O_{n}(D^{i}_{i,j}\cup D^{i}_{j,i} \cup Z_j)]\right) \subseteq Z_j.
\end{equation}

\noindent Due to the broadcast nature of wireless sensor network communication, $p_i$ cannot send points to an single
immediate neighbor without the other neighbors receiving them as well.  In light of this, the algorithm accumulates all points 
(tagged with receipiant IDs)
to be sent to all immediate neighbors in a single packet, $M$.  When an immediate neighbor, $p_j$, receives $M$, the neighbor
extracts those points that are tagged with ID $j$.  If no points are tagged as such, $p_j$ does not regard receipt of $M$ as an
event. 

$p_i$ detects an event if one of the following occurs: (i) the algorithm is
initialized, (ii) $D_i$ changes, (iii) $M$ is received and contains points tagged with $i$ ({\em i.e.} points are received from a neighbor), 
or (iv) a link goes up/down causing $p_i's$ immediate neighborhood to change however, algorithm correctness requires that we assume
the network remain connected).  In response, $p_i$ carries out the following
algorithm whose pseudo-code is given in {\em Global Outliers Detection Algorithm} figure.  First, $P_i$ is updated accounting for 
all $p_j$ from which points were recived in $M$.  Only points not already in $P_i$ are added to $D^{i}_{j,i}$.  
The first two steps
in the main for-loop (``For each $j \in \Gamma_i$, do'') compute a $Z_j$ satisfying (\ref{eq:fixed-point}), although
the result is not guaranteed to be the smallest set to do so.  The ``If....then'' in the main for-loop
tests whether there are any points found sufficient for $p_j$ that $p_i$ cannot already be sure $p_j$ has, {\em i.e.} points in
$Z_j$ but not in $D^i_{j,i} \cup D^i_{i,j}$.  If any such points are found they are added to $M$ along with their recipiant ID $j$.

\begin{figure}

\begin{algorithm}

\caption{Global Outlier Detection}

\begin{algorithmic}

\STATE

\STATE -- Set $M = \emptyset$, and, update $P_i$ accounting for all neighbors $p_j$ from which points were recieved.  For each point $x$
recieved from $p_j$, do the following.  If $x$ is not already $P_i$, then add $x$ to $D^i_{i,j}$.
 
\STATE -- For each $j \in \Gamma_i$, do 

\STATE -- -- Set $Z_j$ $=$ $O_{n}(P_i) \cup [P_i|O_{n}(P_i)]$.

\STATE -- -- Repeat until no change: $Z_j = Z_j \cup [P_i|O_{n}(D^{i}_{i,j}\cup D^{i}_{j,i} \cup Z_j)].$

\STATE -- -- If $Z_j \setminus (D^i_{i,j} \cup D^i_{j,i})$ is non-empty, then

\STATE -- -- -- Append $\langle j, Z_j \setminus (D^i_{i,j} \cup D^i_{j,i}) \rangle$ to $M$. 

\STATE -- -- -- Add points in $Z_j \setminus (D^i_{i,j} \cup D^i_{j,i})$ to $D^i_{i, j}$.

\STATE -- -- End If.

\STATE -- End For.

\STATE -- If $M$ is non-empty, broadcast it to all sensors in $\Gamma_i$.

\end{algorithmic}

\end{algorithm}

\end{figure}

\subsection{Streaming Data and Peer Addition/Deletion}

In our experiments we assume a sliding window model (based on time) in processing the data stream arriving at each sensor.
To do so, we assume each point is time-stamped when sampled by the sensor.  Under the assumption that the sensor clocks 
are synchronized sufficiently well, sensor $p_i$ deletes all points in $P_i$ (regardless of where they were originally sampled) 
once thier time-stamp indicates they are no longer in the window.  

The algorithm can be easily modified to accomidate the addition of sensors during operation.  All that is required to do so is
treat the arrival of a new sensor as an event for the new sensor and for all its immediate neighbors.  
The algorithm can also be modified to accomidate the removal of sensors ({\em e.g.} when their battery is depleted) assuming
that the network remains connected.  In the sliding window model, a simple strategy is to merely
allow points that originated with the removed sensor to age out of the window at the expense of tolerating, strictly speaking, an inaccurate 
result until this happens.  A more general and complex solution is to propogate messages into the network causing sensors to
explicitly delete those points that originated with the removed sensor.  We leave 
the details of this approach to future work.  

\subsection{Algorithm Correctness}

The correctness of the algorithm can be proved in the following sense: 
if the data and network links remain static (and the network is connected), then communication will eventually 
stop at which point all sensors' outlier estimate will equal $O_n(D)$.
It is important to emphasize that this does {\em not} mean the algorithm cannot
handle dynamic data or network links.  Merely that, upon such a change, the algorithm
will respond and converge on the correct answer.  But, naturally, such
convergence is gauranteed only if the data and network remain static long enough.

It is easy to see that, barring data or network change, the algorithm will always
terminate.  So, the proof proceeds in two steps. First, upon
termination, all sensors have the same outlier estiamtes and support 
(Theorem \ref{thm:termination}).  Second, the consistent outlier estimates shared by all sensors 
is indeed the correct one (Theorem \ref{thm:correctness}).  Proofs of Theorems \ref{thm:termination} and 
\ref{thm:correctness} are provided in Appendix \ref{appendix:proofs}.

\begin{thm}
\label{thm:termination}
Assuming a connected network, if for all sensors $p_i$: $D_i$ and $\Gamma_i$ do not change, then upon termination of 
the algorithm all sensors' outlier estimates and supports agree: for all $p_i, p_j$: 
$O_n(P_i) = O_n(P_j)$ and $[P_i|O_n(P_i)] = [P_j|O_n(P_j)]$.
\end{thm}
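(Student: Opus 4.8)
The plan is to reduce the entire statement to one combinatorial lemma about support sets, applied to a ``shared'' dataset attached to each edge of the network, and then to propagate the resulting agreement along paths using connectivity of the graph.

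The core is a \emph{fixed-set lemma}: if $P \subseteq \mathbb{D}$ is finite and $K \subseteq P$ satisfies $O_n(P) \cup [P|O_n(P)] \subseteq K$ and $[P|O_n(K)] \subseteq K$, then $O_n(K) = O_n(P)$ and $[K|O_n(K)] = [P|O_n(P)]$. I would prove this by a short extremal argument using only anti-monotonicity. If $|P| < n$ then $O_n(P) = P \subseteq K \subseteq P$ forces $K = P$; otherwise $|O_n(P)| = |O_n(K)| = n$. Suppose $O_n(K) \neq O_n(P)$ and choose $x \in O_n(P) \setminus O_n(K)$ and $y \in O_n(K) \setminus O_n(P)$. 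Since $[P|x] \subseteq K \subseteq P$ and $R(x,[P|x]) = R(x,P)$, anti-monotonicity squeezes $R(x,K) = R(x,P)$; symmetrically, using $[P|y] \subseteq [P|O_n(K)] \subseteq K$, we get $R(y,K) = R(y,P)$. But $x \in O_n(P)$ and $y \in P \setminus O_n(P)$ give $R(x,P) > R(y,P)$, while $y \in O_n(K)$ and $x \in K \setminus O_n(K)$ give $R(y,K) > R(x,K)$, so $R(x,K) = R(x,P) > R(y,P) = R(y,K) > R(x,K)$, a contradiction. Once $O_n(K) = O_n(P)$, for each $x$ in this common set $[P|x]$ is a support set of $x$ over $K$ while $[K|x]$ is a support set of $x$ over $P$, so minimality together with the tie-breaking order $\prec$ forces $[P|x] = [K|x]$, which gives the support equality.

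Next I would analyze the quiescent state and, for each pair of neighbors $p_i, p_j$, produce a set $K$ meeting the lemma's hypotheses with respect to both $P_i$ and $P_j$. Under static data and links each $P_\ell$ only grows, so message reliability together with ``no message in flight at quiescence'' yields that the recorded common knowledge agrees on both sides, $D^i_{i,j} = D^j_{i,j}$ and $D^i_{j,i} = D^j_{j,i}$; write $K := D^i_{i,j} \cup D^i_{j,i} = D^j_{i,j} \cup D^j_{j,i}$, so that $K \subseteq P_i \cap P_j$. Because $p_i$ sent nothing in its final activation, its terminal $Z_j$ satisfies $Z_j \subseteq D^i_{i,j} \cup D^i_{j,i} = K$, hence $D^i_{i,j} \cup D^i_{j,i} \cup Z_j = K$; combining this with the exit condition of the inner ``repeat'' loop, $[P_i|O_n(D^i_{i,j} \cup D^i_{j,i} \cup Z_j)] \subseteq Z_j$, and with the initialization $Z_j \supseteq O_n(P_i) \cup [P_i|O_n(P_i)]$, we obtain $O_n(P_i) \cup [P_i|O_n(P_i)] \subseteq K$ and $[P_i|O_n(K)] \subseteq K$. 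The identical argument for $p_j$ (using $K = D^j_{i,j} \cup D^j_{j,i}$ and $p_j$'s final activation) shows $K$ also satisfies the hypotheses relative to $P_j$. The fixed-set lemma then yields $O_n(P_i) = O_n(K) = O_n(P_j)$ and $[P_i|O_n(P_i)] = [K|O_n(K)] = [P_j|O_n(P_j)]$.

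Finally, this agreement holds across every edge of the communication graph, and since the graph is connected it extends to all pairs $p_i, p_j$, which is the claim. I expect the routine-but-delicate part to be the quiescent-state bookkeeping in the third paragraph: pinning down precisely what $D^i_{i,j}$ and $D^i_{j,i}$ record, checking that once every sensor is idle with no pending messages the global state is genuinely frozen (so the ``terminal'' $Z_j$ is indeed the one computed in the last activation and $D^i_{i,j},D^i_{j,i}$ have their final values there), and using reliability to conclude the two endpoints' knowledge sets coincide. The conceptual crux — and the step I would be most careful about — is isolating the fixed-set lemma with exactly the right hypotheses and recognizing that the edge-set $K$ is simultaneously a fixed set for both endpoints.
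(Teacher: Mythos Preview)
Your proof is correct and takes essentially the same route as the paper: both work with the edge-shared set $K = D^i_{i,j} \cup D^i_{j,i}$, establish $O_n(P_i) = O_n(K) = O_n(P_j)$ (and the analogous support equality) via anti-monotonicity plus the termination conditions, and then propagate along the connected graph. Your fixed-set lemma is a clean abstraction of what the paper proves inline through its Lemmas~\ref{lem:2} and~\ref{lem:3}; the only cosmetic differences are that your contradiction uses two points from the symmetric differences whereas the paper's uses a single point from $O_n(K)$, and for the support equality you pass through $[K|x]$ where the paper passes through $[P_i \cap P_j|x]$.
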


\begin{thm}
\label{thm:correctness}
Assuming a connected network, if for all sensors $p_i$: $D_i$ and $\Gamma_i$ do not change, then upon termination of the 
algorithm, all sensors' outlier estimate will be correct: for all $p_i$: 
$O_n(P_i) = O_n(D)$.
\end{thm}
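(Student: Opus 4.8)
The plan is to take Theorem \ref{thm:termination} as given and build the correctness argument on top of it. Write $O^\star$ for the common outlier estimate $O_n(P_i)$ and $S^\star$ for the common support $[P_i|O_n(P_i)]$; by Theorem \ref{thm:termination} these are independent of $i$. I would first record three elementary facts. (a) $P_i \subseteq D$ for every $i$ and $D = \bigcup_k D_k$, so every point of $D$ lies in some $D_k \subseteq P_k$. (b) $O^\star \cup S^\star \subseteq P_i$ for every $i$, since $O^\star = O_n(P_i) \subseteq P_i$ and $S^\star = [P_i|O_n(P_i)] \subseteq P_i$. (c) For each $x \in O^\star$, $R(x,P_i) = R(x,S^\star)$ for every $i$: indeed $[P_i|x] \subseteq S^\star \subseteq P_i$, so anti-monotonicity squeezes $R(x,[P_i|x]) \geq R(x,S^\star) \geq R(x,P_i)$, and the two ends are equal because $[P_i|x]$ is a support set of $x$ over $P_i$.

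The heart of the proof, and the step I expect to be the main obstacle, is the claim that enlarging from the common view to the full dataset does not change the rank of an estimated outlier: for every $x \in O^\star$, $R(x,D) = R(x,S^\star)$. I would argue by contradiction. Since $S^\star \subseteq D$, anti-monotonicity gives $R(x,S^\star) \geq R(x,D)$; suppose the inequality is strict. Then smoothness, applied with $Q_1 = S^\star$ and $Q_2 = D$, yields a point $z \in D \setminus S^\star$ with $R(x,S^\star) > R(x, S^\star \cup \{z\})$. By fact (a), $z \in P_k$ for some $k$, and by fact (b), $S^\star \subseteq P_k$, so $S^\star \cup \{z\} \subseteq P_k$; anti-monotonicity then gives $R(x,P_k) \leq R(x, S^\star \cup \{z\}) < R(x, S^\star)$, contradicting fact (c), which says $R(x,P_k) = R(x,S^\star)$. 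Hence $R(x,D) = R(x,S^\star) = R(x,P_i)$ for every $i$ and every $x \in O^\star$.

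Finally I would deduce $O_n(D) = O^\star$ by a direct top-$n$ comparison. Take any $x \in O^\star$ and any $y \in D \setminus O^\star$; by fact (a), $y \in P_k$ for some $k$. Since $O^\star = O_n(P_k)$ and $y \in P_k \setminus O_n(P_k)$, the definition of $O_n$ gives $R(x,P_k) > R(y,P_k)$, and combining the previous paragraph with anti-monotonicity ($P_k \subseteq D$) gives $R(x,D) = R(x,P_k) > R(y,P_k) \geq R(y,D)$. Thus every point of $O^\star$ outranks every point of $D \setminus O^\star$ with respect to $R(\cdot,D)$. A short argument handles cardinalities and the degenerate case: either some $|P_i| < n$, which forces $P_i = O_n(P_i) = O^\star$ for that $i$ and then (by agreement) $P_j = O^\star$ for all $j$, so $D = O^\star$ with $|D| < n$ and $O_n(D) = D = O^\star$; or every $|P_i| \geq n$, so $|O^\star| = n \leq |D|$ and $O^\star$ is exactly the top $n$ points of $D$. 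Either way $O^\star = O_n(D)$, which gives the theorem.
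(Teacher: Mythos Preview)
Your proof is correct and shares the same core mechanism as the paper's: use Theorem~\ref{thm:termination} to get a common estimate $O^\star$ and support $S^\star$, then invoke smoothness to produce a witness $z \in D \setminus S^\star$ that strictly lowers the rank, locate $z$ in some $P_k$, and derive a contradiction from the agreement of supports across sensors. The paper packages this as a single contradiction launched from Lemma~\ref{lem:3} (which directly supplies an $x \in O_n(P_i)$ with $R(x,P_i) > R(x,D)$), whereas you first establish the cleaner intermediate fact $R(x,D) = R(x,S^\star)$ for \emph{every} $x \in O^\star$ and then finish with an explicit top-$n$ comparison. Your organization is slightly more self-contained (you do not need Lemma~\ref{lem:3}) and, notably, you handle the degenerate case $|P_i| < n$ explicitly; the paper's proof tacitly assumes $|P_i| \geq n$ when it invokes Lemma~\ref{lem:3}. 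Both routes are equally short, and the essential idea is the same.
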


\noindent {\bf Comments:} {\bf 1)} Theorem \ref{thm:termination} holds without the smoothness axiom, hence, for any
anti-monotonic $R$, upon convergence, all sensors will
agree on their outlier estimate and support.  However, without the smoothness axiom,
Theorem \ref{thm:correctness} does not hold, {\em i.e.} the consistent outlier estimates
might not be the correct one.  There are counter-examples which show how an 
anti-monotonic, but not smooth $R$ cause the algorithm to terminate without all
sensors agreeing upon the correct set of outliers.   

{\bf 2)} For an arbitrary $R$, it is not clear how to efficiently compute $[P|x]$ and
we do not address the issue.
However, efficient computation is straight-forward for the $R$ we
consider in our experiments: average distance to the $k^{th}$ nearest neighbor.  

\section{Semi-Global Distributed Outlier Detection Algorithm}
\label{sec:localized}

It can be desirable 
for sensors to find outliers only with respect to the data contained in nearby sensors, rather than
the entire network.   
In this section, we describe how to modify the global
outlier detection algorithm to act in a semi-global manner.  Under this approach, 
each sensor computes 
outliers only from within those points sampled in its spatial proximity.  

To account for spatial locality, we use {\em hop distance}: the number of hops
bewteen two sensors along their shortest path in the underlying communication network.  
Given integer $d$ and sensor $p_i$, let $D_i^{\leq d}$
denote the union of all $D_j$ such that $p_j$ and $p_i$ have hop distance no greater
than $d$.  The semi-global outlier detection problem requires each $p_i$ to compute 
$O_n(D_i^{\leq d}$).  Setting $d$ to infinity yields the 
global outlier detection problem discussed earlier.  

To account for hop distance, each data point $x$ has an additional field $x.hop$
(at birth $x.hop$ is set to zero).  Let $x.rest$
denote all the remaining fields -- these are the ones used by the rating
function $R$.
Given a set of points $Q$, for $0 \leq h \leq d$, let $Q^{\leq h}$ be the 
set of points $x \in Q$ with $x.hop \leq h$.  Let $[Q]^{min}$ be the result of replacing 
all points that differ only in their hop field by the point with the smallest
hop field.  For example, consider $Q = \{w,v,x,y,z\}$ where
$w.rest = v.rest$, $x.rest = y.rest = z.rest$, and $v.rest \neq x.rest$.  If
$w.hop < v.hop$ and $x.hop < y.hop, z.hop$, then 
$[Q]^{min}$ $=$ $\{w,x\}$.

\subsection{Semi-Global Outlier Detection Algorithm}

The basic idea is that each 
sensor $p_i$ will run the global outlier detection algorithm
over only those points arising on sensors within $d$ hops.  At first glance, 
the following simple modification of the global algorithm seems adequate.  Before $p_i$ sends a 
copy of a point, $x$, to its neighbors, it first increments 
$x.hop$ and sends only if $x.hop \leq d$.  Unfortunately, such a simple modification
will not work.  It does not take into account the fact that $x$ should not have
any effect on the outlier determination process of sensors $p_j$ whose distance
from $p_i$ is more than $d-x.hop$.  Examples can be demonstrated wherein this omission causes
an incorrect overall result. 

To avoid this problem, $p_i$ must partition $P_i$ into $d$ parts:
$P_i^{\leq h}$ for $0 \leq h \leq d-1$.  For each, in essence, the global outlier detection
algorithm is applied.  Upon detecting an event (defined as before),
$p_i$ carries out the following algorithm
whose pseudo-code is given in the {\em Semi-Global Outlier Detection Algorithm} figure. 

First, $P_i$ is updated accounting 
for all $p_j$ from which points were recived in $M$.  Because of the hop fields, the update step is somewhat more 
complicated than that of the Global Outlier Detection Algorithm.  A point $x$ from $p_j$ is added to $D^{i}_{j,i}$ 
if there does not exist $y \in P_i$ with $x.rest = y.rest$ ($x$ does not already
appear in $P_i$).  Or, if there does exist $y \in P_i$ with $x.rest = y.rest$, but $x.hop < y.hop$, then
$x$ replaces $y$ in $P_i$ (updating as needed $D_i$ and $D^i_{f,i}$ for 
each $f \in \Gamma_i$).  Note, there cannot be more than one $y$ with the same rest fields as $x$ since all but the
point with the smallest hop would have been removed earlier.  

Next, for each neighbor $p_j$ and each $0 \leq h \leq d-1$, a set $Z_j^h$ is computed which satisfies (\ref{eq:fixed-point}) with
$Z_j$, $P_i$, $D^i_{i,j}$, and $D^i_{j,i}$ replaced by $Z_j^h$, $P_i^{\leq h}$, $D^{i,\leq h}_{i,j}$, and
$D^{i,\leq h}_{j,i}$, respectively.  This computation is done by the first steps inside the nested for loops.
Then, the hop field for each point in $Z_j^h$ is incremented in preparation for sending to $p_j$.  

Once all $0 \leq h \leq d-1$ have been processed for $p_j$ (the inner for loop completes), $Z_j^1 \cdots Z_j^{d-1}$ are
unioned and redundancies are eliminated.  For any pair of points $x,y$ in $\bigcup_{h=0}^{d-1}Z_j^h$, if $x.rest = y.rest$ and
$x.hop < y.hop$, then $y$ is dropped (this action is signified by the `min' superscript in the step immediately after the inner 
for loop).  Then, all points $x$ are removed from $Z_j$ if there exists a point $y$ in $(D^{i}_{i,j} \cup D^{i}_{j,i})$ with 
the same rest fields but $y.hop \leq x.hop$.  If the resulting $Z_j$ is non-empty, then these points are be added to 
$M$ (along with ID $j$) for broadcast to neighbors.  And,
$D^{i}_{i,j}$ is updated by adding the points in $Z_j$.

\begin{figure}

\begin{algorithm}

\caption{Semi-Global Outlier Detection}

\begin{algorithmic}

\STATE

\STATE -- Set $M = \emptyset$, and, update $P_i$ accounting for all neighbors $p_j$ from which points were recieved.  For each point $x$
recieved from $p_j$, do the following.  If there does not exist $y$ in $P_i$ with $x.rest$ $=$ $y.rest$, then add $x$
to $P_i$ (and update $D^i_{i,j}$), otherwise if $x.hop < y.hop$, then replace $y$ with $x$ in $P_i$ (updating as needed $D_i$ and $D^i_{f,i}$ for 
each $f \in \Gamma_i$).
 
\STATE -- For each $j \in \Gamma_i$, do 

\STATE -- -- For $h = 0$ to $d-1$ 

\STATE -- -- -- Set $Z_j^{h}$ $=$ $O_{n}(P_i^{\leq h}) \cup [P_i^{\leq h}|O_{n}(P_i^{\leq h})]$.

\STATE -- -- -- Repeat until no change: $Z_j^h = Z_j^h \cup [P_i^{\leq h}|O_{n}(D^{i,\leq h}_{i,j}\cup D^{i,\leq h}_{j,i} \cup Z_j^h)].$

\STATE -- -- -- Increment the hop field for each point in $Z_j^h$.

\STATE -- -- End For.

\STATE -- -- Set $Z_j = \left[\bigcup_{h=0}^{d-1}Z_j^h\right]^{min}$.

\STATE -- -- Remove points $x$ from $Z_j$ such that there exists $y \in (D^{i}_{i,j} \cup D^{i}_{j,i})$ with $x.rest = y.rest$ and  $y.hop \leq x.hop$.

\STATE -- -- If $Z_j$ is non-empty, then

\STATE -- -- -- Append $\langle j, Z_j \rangle$ to $M$. 

\STATE -- -- -- Update $D^{i}_{i,j}$ by adding the points in $Z_j$.

\STATE -- -- End If.

\STATE -- End For.

\STATE -- If $M$ is non-empty, broadcast it to all sensors in $\Gamma_i$.

\end{algorithmic}

\end{algorithm}

\end{figure}

\comment{
{\bf Comment:} To account for spatial locality, we use hop distance between
sensors rather than any physical coordinate system ({\em i.e.} the Euclidean distance between the physical location of $p_i$ and $p_j$).  While 
there are some advantages to choosing a physical distance, one significant
disadvantage motivated our use of hop distance.   The neighbors of $p_i$ are
determined by its transmission radius.  If this radius is smaller than
$d$, then $p_i$ will potentially need to create a large number of partitions on
which to apply the global outlier detection algorithm.  Indeed, suppose
$p_i$ held points from $\beta$ different sensors.  Since the transmission radius
is less than $d$ we cannot assume that $p_i$ can know the spatial location of
all sensors within a $d$-radius.  Therefore there could be a sensor $p_j$ inside the
$d$-radius whose spatial location is unknown to $p_i$, but many (in some cases, all)
subsets of $p_i's$ $\beta$ different data point origins could be within a $d$-radius
of $p_j$.  All of the subsets of $p_i's$ data points arising of each
of these subsets of origins could be relevant to $p_j$.   Therefore, $p_i$ will
need to carry out the global outliers algorithm over each of these data point 
subsets.  In a large sensor network with small transmission radius, the number
of such subsets could be quite large (much larger than $d$).  This produces
a heavy cost, since the amount of communication needed is directly dependent on the 
number of global outlier
algorithm invocations.
}


\section{Performance evaluation}

\subsection{Experimentation setup}

We used the SENSE wireless sensor network simulator \cite{sense:2004} to evaluate the
performance of the global and semi-global outlier detection algorithms.
Specifically, we analyzed the following metrics: (1) the accuracy of the algorithms
in detecting outliers; (2) the average amounts of total energy, transmission energy,
and receive energy consumed per node per sampling period; and (3) the minimum and
maximum amount of energy consumed in the network. We observed both the global and
semi-global outlier detection algorithms to be highly accurate as nodes converged
upon the correct results approximately 99\% of the time. We attribute any detection
error to dropped packets. Since average detection accuracy was consistent across
all simulation parameters, we did not include any accuracy-related plots in this
manuscript.

Various scenarios were used to analyze the performance of our algorithms. First, we
compared our algorithms' energy usage against that of a purely centralized outlier detection
algorithm. Here, all nodes periodically sent their sliding window contents to a central
node which detected outliers based on the unioned data sets and returned the
outliers back to the nodes. For simplicity, we configured the centralized algorithm
to calculate only global outliers since for this algorithm, energy usage is
independent of whether global or semi-global outliers are detected. 
i
Also, all algorithms (including
the centralized solution) were evaluated using the following two outlier ranking
functions ($R$): \textit{distance to nearest neighbor} (NN) and \textit{average
distance to k nearest neighbors} (KNN).

We chose to use a centralized algorithm for our comparison because, to the best
of our knowledge, there exist no comparable distributed solutions for WSN outlier detection.
We find such a comparison to still be valid as many WSN deployments continue to employ
centralized configurations, citing ease of administration as well as maintenance of a
single (and "`standard"') point of interface with the growing number of applications
and systems in the sense-and-respond computing domain. Such reasons, however, do not
preclude the utility of a distributed algorithm such as ours, since it remains very
useful as a general data processing solution for a wide range of applications, whether they
are centralized are not.

For our data sets, we used real-world recorded sensor data streams from \cite{sensordata},
in which distributed data samples are both spatially and temporally correlated. The
data set we used was composed of series of data samples describing environmental
phenomena such as heat, light, and temperature from 53 sensors which periodically
transmitted individual data samples to a central base station. The data set did
contain missing data points, which to the best of our knowledge was largely due to
packet loss. Hence, we replaced missing data points with the average values of the
data points within sliding windows preceding the missing points. This helped retain
the temporal trends of the data streams. The data points we used contained the following
features: (1) ID of the sensor that produced the data point; (2) epoch (sequential number
denoting the data point's position in the sensor's entire stream); (3) data value (we
specifically used temperature); and (4) x,y location coordinates. We used the data
points' temperature value and location coordinates as inputs into the outlier rating
functions. The location coordinates can represent either the place of
measurements or an estimate of a position of a target or some other spatial
information. It is important to note that these coordinates are a part of
the data on which our algorithm works in the example. They might suffer errors,
and become anomalous, just as would any other attribute of the data, due to
an inaccurate initialization, power degradation, or a transmission
error. The algorithm itself, however, would work the same regardless if such
coordinates are given or not.

We originally simulated two networks based on the coordinates of the sensors in the
data set: the first of size 32 nodes (which included a uniformly random sampling of
the full network) and the second of size 53. The purpose of simulating two networks
was to examine how well the algorithm scaled with the size of the network. We found
that the as the network size increased, the performance benefit of the distributed
algorithms increased in comparison to the centralized algorithms and that performance
trends for different test variables were generally the same. Hence, we did not include
any detailed results associated with the smaller network in this manuscript.

We simulated a terrain of size 50m$\times$50m. Most hardware specifications claim that
a sensor node's transmission range typically reaches up to approximately 250m, when properly
elevated. However, when placed on the ground the reported ranges are much smaller \cite{zuniga04}
and for reliable communication indoor using Crossbow motes, the effective range drops to a
few meters~\cite{szymanski07}. Therefore, we configured all nodes to have a  uniform transmission
range of approximately 6.77m. We also used the hardware energy model based on the Crossbow mote
specifications \cite{motespecs} with a transmit/receive/idle power setting of 0.0159W/0.021W/3e-6W
(assuming a  3V power source). We simulated the wireless transport medium using the 
free-space signal propagation model.

Two protocols were used for routing. For the distributed algorithms, we used simple broadcast (as
opposed to unicast) transmission with promiscuous listening that allowed all nodes
to send data points to all their adjacent neighbors using one transmission. For the
centralized algorithm, we used the well accepted AODV \cite{perkins97ad} wireless
routing protocol for multi-hop communication. We note that a simple end-to-end
acknowledgment mechanism was also used to reinforce reliable communication. While alternative
protocols do exist for more data-centric and energy-efficient communication, our main goal
was to compare the overhead between the algorithms in a straight-forward manner without
having to involve ourselves with balancing various advantages towards either algorithm.

All simulations were run for 1000 seconds of simulated time and were repeated four times
using different random number generator seed values to obtain averaged results. As shown in the following plots,
we collected results for different values of the following algorithm parameters: (1) the
length of the node's sliding window, \textit{w}; and (2) the number of outliers to be reported,
\textit{n}. Additionally, for the distributed localized outlier detection
algorithm, we varied the hop diameter for the localized outlier detection
algorithm for from one to three hops. The
labeling of the data in the plots is as follows: (1) $Centralized$ for results obtained
with the centralized algorithm; (2) $Global-NN$ and $Global-KNN$ for results obtained
using distributed global outlier detection with NN and KNN outlier detection ranking functions ($R$),
respectively; and (3) $Semi-global, epsilon=x$ for all results obtained using distributed
localized outlier detection where $x$ is the value of the hop diameter of the
spatial extent outlier detection. For brevity, we will often refer to the 
different algorithms by these labels.

\subsection{Experimentation results}

\begin{figure}[t]
\begin{center}
\includegraphics[width=7cm]{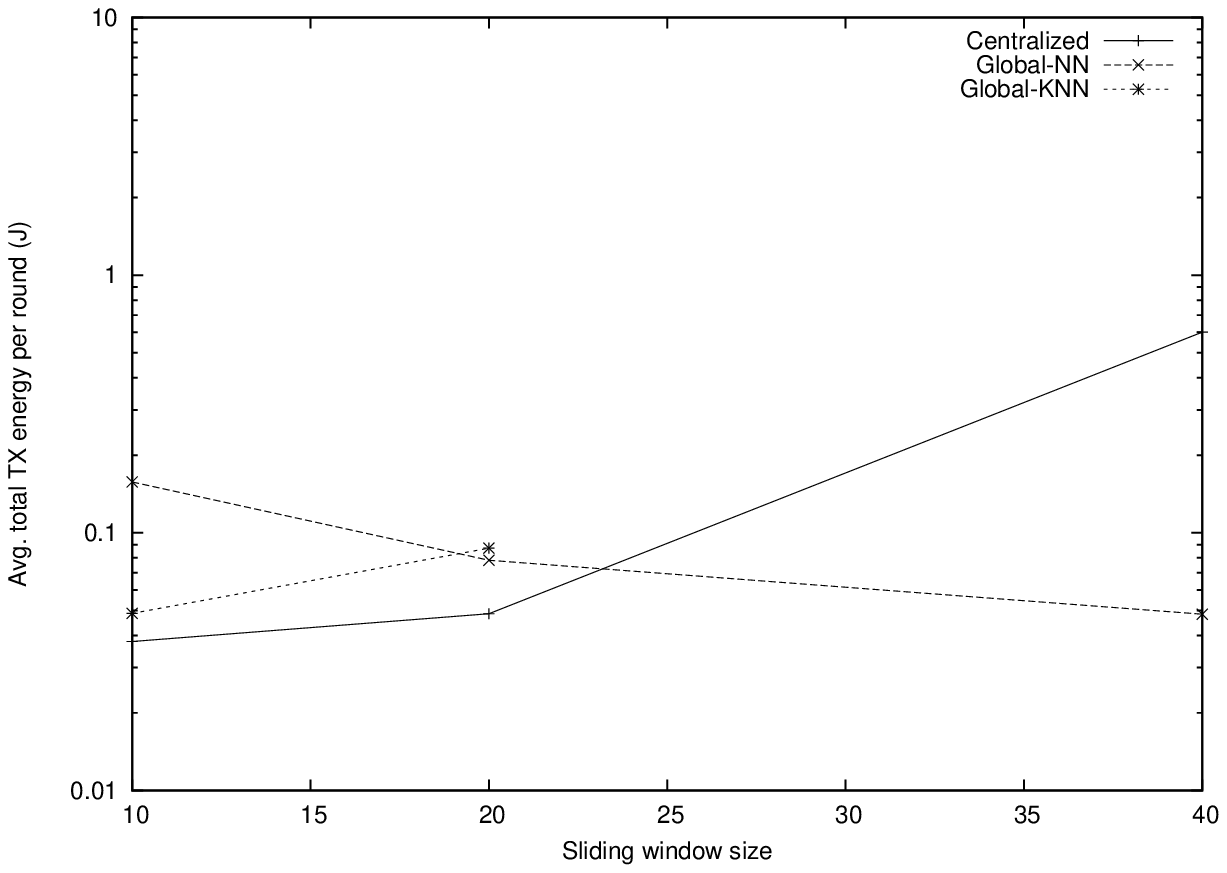}
\includegraphics[width=7cm]{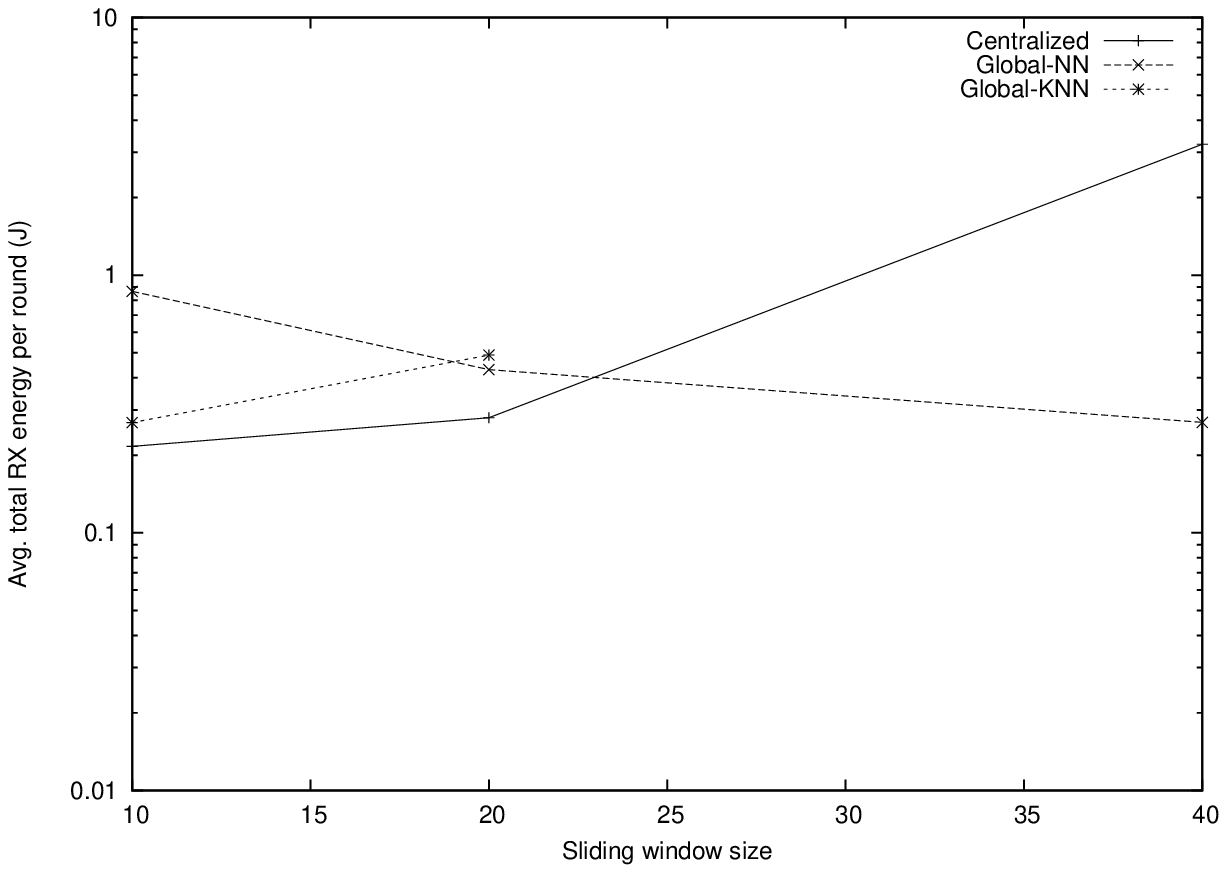}
\caption{Average transmission and receive energy consumed per node per sample interval vs. $w$ ($n$=4, $k$=4)
for global outlier detection.}
\label{fig:global1}
\end{center}
\end{figure}

\subsubsection{Effect of sliding window size}

The plots in Figure \ref{fig:global1} compare the rate
of energy usage of the network between using the centralized algorithm
and the distributed algorithm for global outlier detection as $w$ increases and $n$ and $k$
remain fixed at 4. Here, we show separate plots for transmission (TX) and receive (RX) energy for
the reader who is interested in the disparity between the energy consumption due to different
radio operations. We note that data points are
missing for Global-KNN at $w$=40, due to the inability of our computing resources to
complete simulations for this particular algorithm at the given parameter value. However,
preliminary results based on similar simulations and statistics are shown in \cite{BSWGK:2006}.
Since the trends between both sets of results are nearly identical for the non-missing
data points, it is reasonable to extrapolate the values of the missing points here.

Both figures show that as $w$ increases, Global-NN is the only algorithm that reduces its
energy usage. Figure \ref{fig:global1} shows that Global-NN eventually becomes the most
energy-efficient solution given the domain of $w$. We attribute
Global-NN's reduction in energy usage to an increasing amount of incoming data redundancy
as the size of the sliding window increases. Since Global-NN only uses one supporting
point to determine an outlier, the probability of finding new outliers or supports
with this scheme in each new time interval as the sliding window increases is low.

Regarding the energy consumption of Global-KNN and Centralized, Figure \ref{fig:global1}
reveals trends of increasing energy consumption as $w$ increases.
However, given comparable results in \cite{BSWGK:2006}, we can extrapolate that Global-KNN's
amount of energy consumption is a concave increasing function of
$w$, whereas Centralized's is a convex increasing function, which makes the latter comparatively
less energy-efficient in that it approaches a point of network failure at a higher rate.
In comparison to Centralized, the energy trend for Global-KNN for this data set
indicates that when global outliers are defined by multiple supporting
points (in this case 4), the increasing size of the time interval from which the points
are chosen has a less drastic effect on the number of messages required for the algorithm to converge.
Overall, in cases where a user prefers to use more supporting points and a larger sliding
window to define outliers, energy usage will be higher than using Global-KNN, but it is
still more beneficial to use a distributed solution.

\begin{figure}[t]
\begin{center}
\includegraphics[width=7cm]{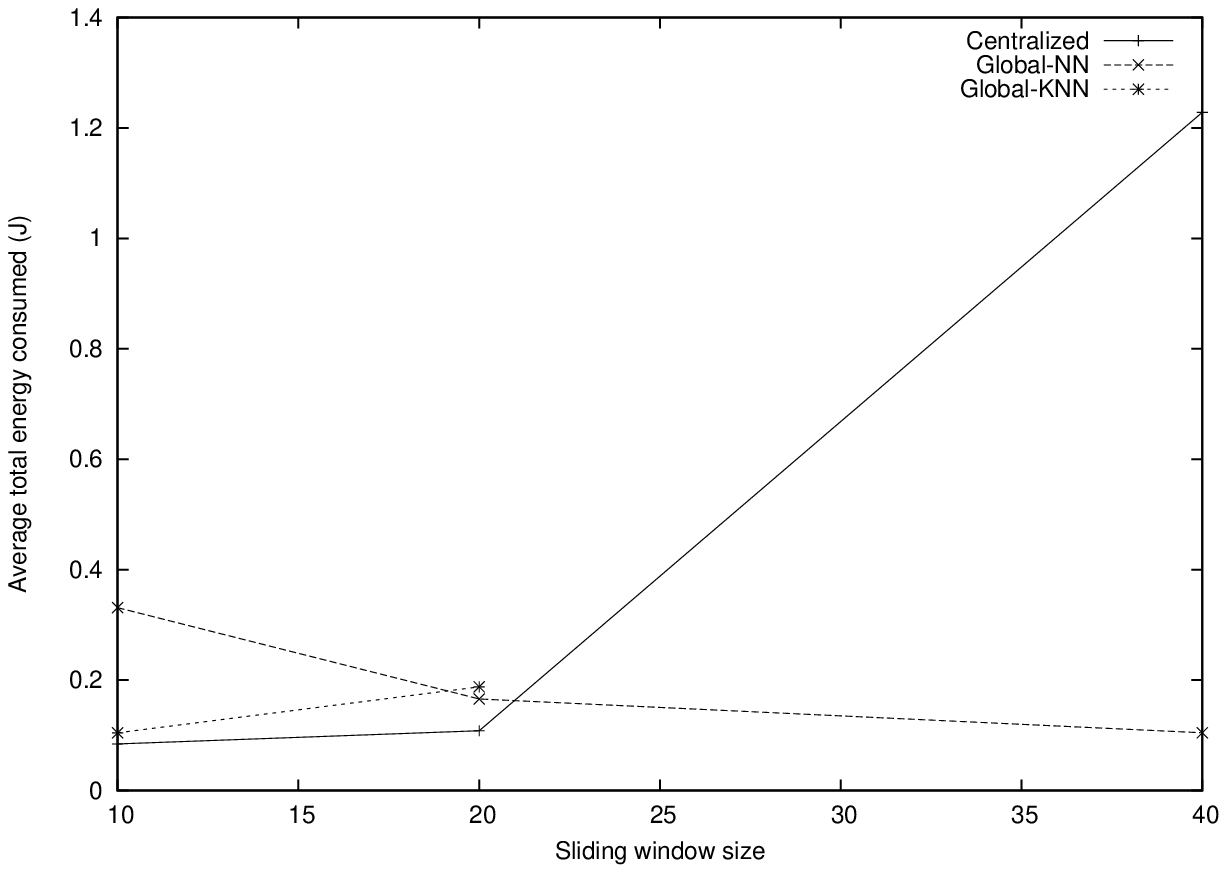}
\includegraphics[width=7cm]{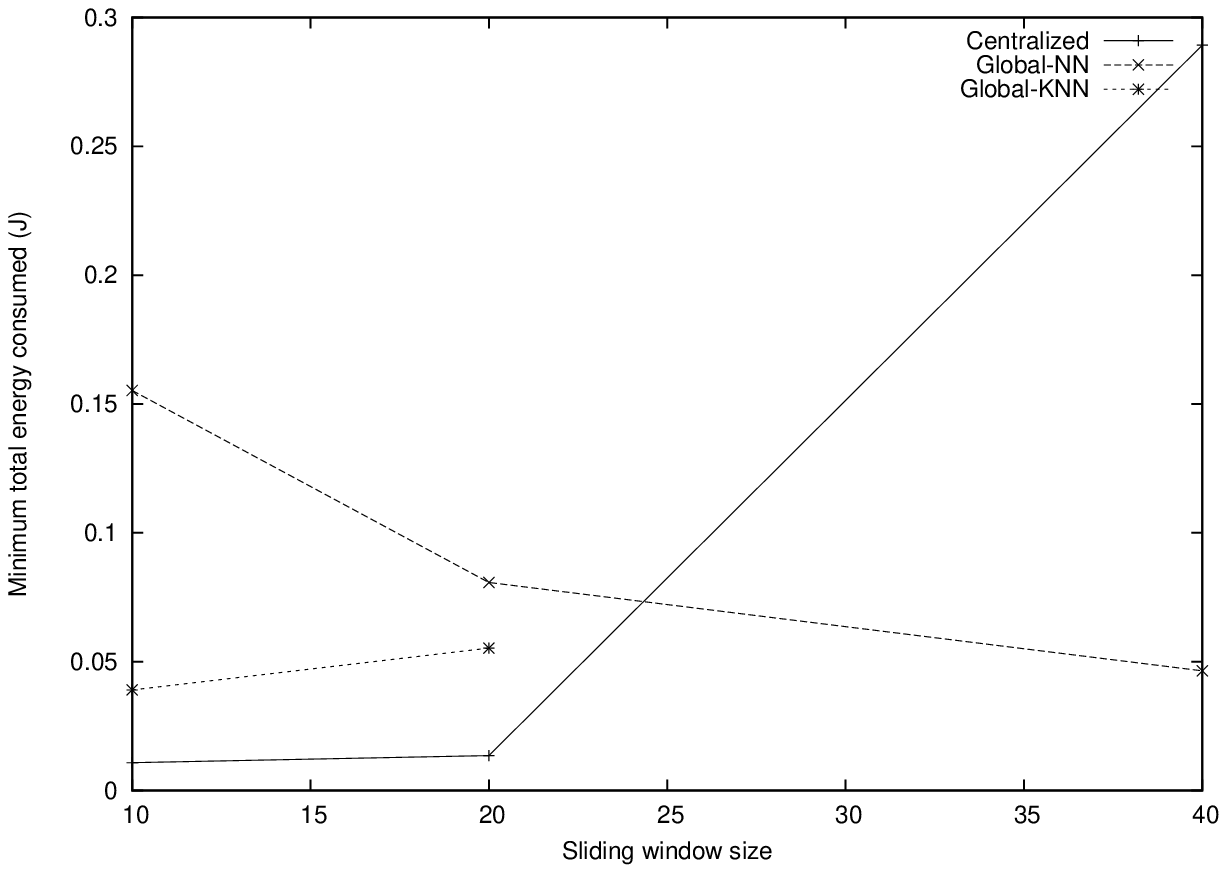}
\includegraphics[width=7cm]{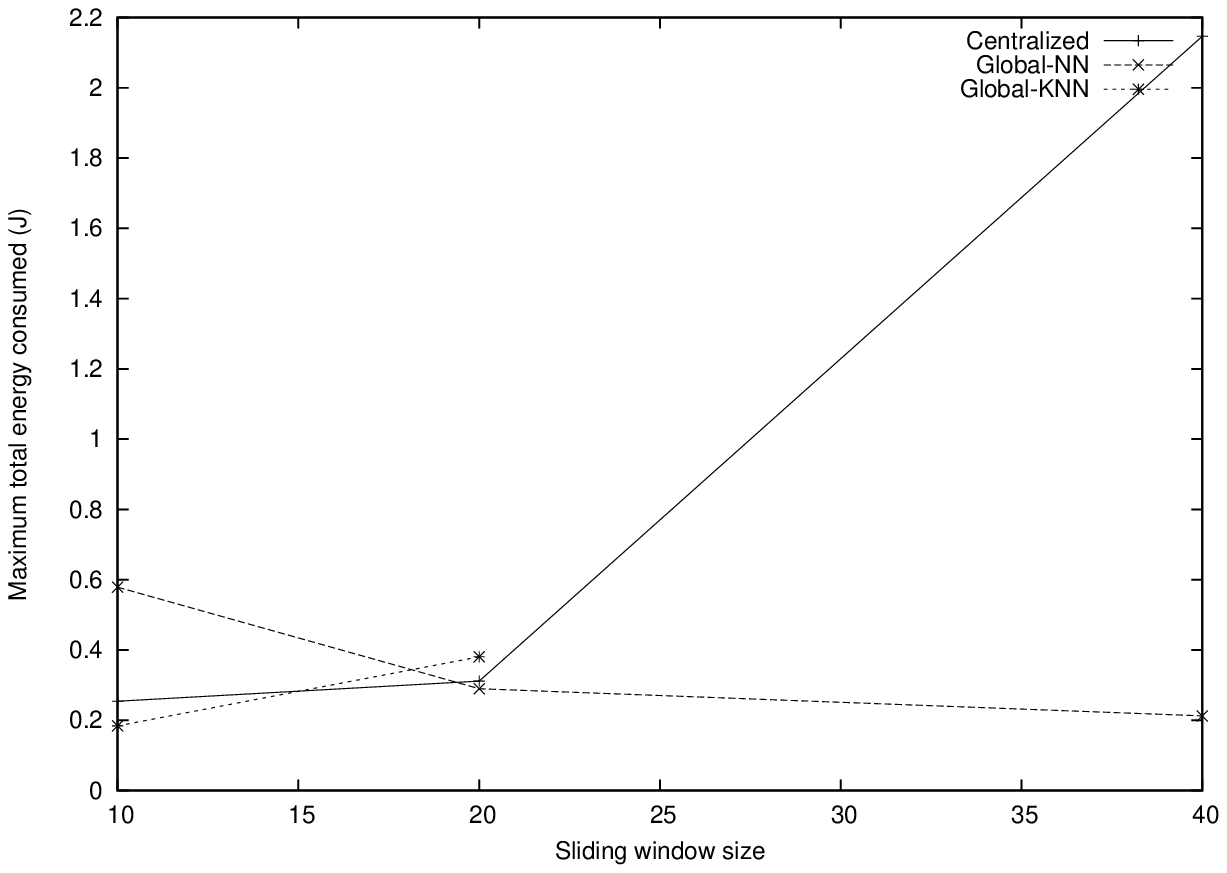}
\caption{Average, minimum, and maximum amount of energy consumed by a node
for global outlier detection.}
\label{fig:min_avg_max}
\end{center}
\end{figure}

Figure \ref{fig:min_avg_max} shows the minimum, average, and
maximum amounts of energy consumption for a sensor node as $w$ increases. Since we limit our focus
primarily to the $range$ of a sensor's energy consumption,
with the intent of analyzing how energy is balanced under the different algorithms, we present
data in terms of total energy consumption. The analysis of TX and RX energy have less value here. Figure \ref{fig:min_avg_max} further accentuates the advantage of using the Global-NN
outlier detection solution over Centralized for large window sizes.
Another observation is that the range of energy consumption for different motes running
the same detection algorithm is larger for the centralized solution than for the
distributed solution. Figure \ref{fig:min_avg_max_norm}
clearly expresses this point by illustrating the values shown previously in Figure
\ref{fig:min_avg_max}, only this time
normalizing the values with respect to the average energy consumption.
For $w$=10, the most energy consuming node consumed nearly
three times more energy than the average node in a centralized algorithm and
less than twice the energy of the average node in both distributed algorithms.

\begin{figure}[t]
\begin{center}
\includegraphics[width=5cm,angle=-90]{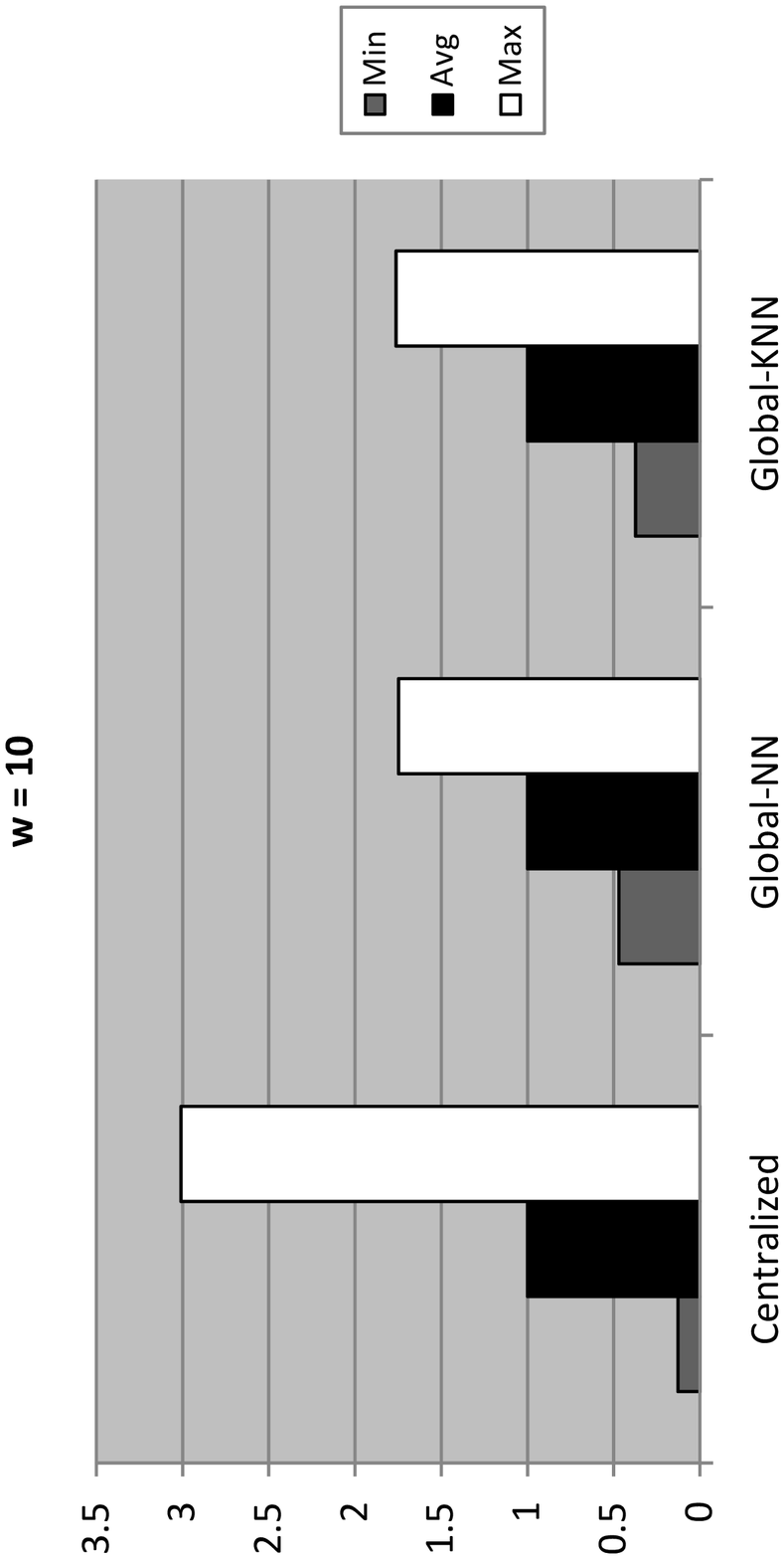}
\includegraphics[width=5cm,angle=-90]{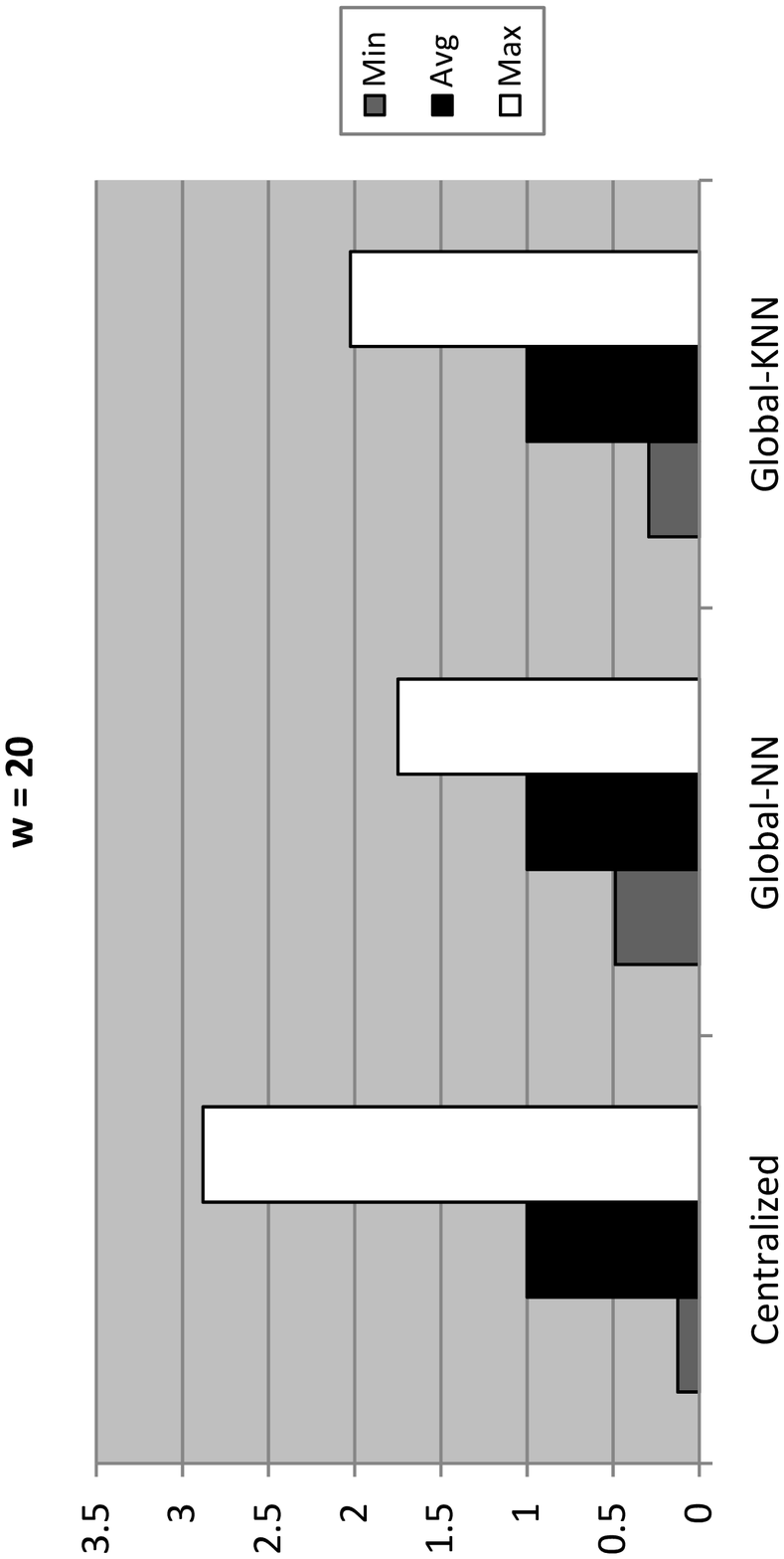}
\includegraphics[width=5cm,angle=-90]{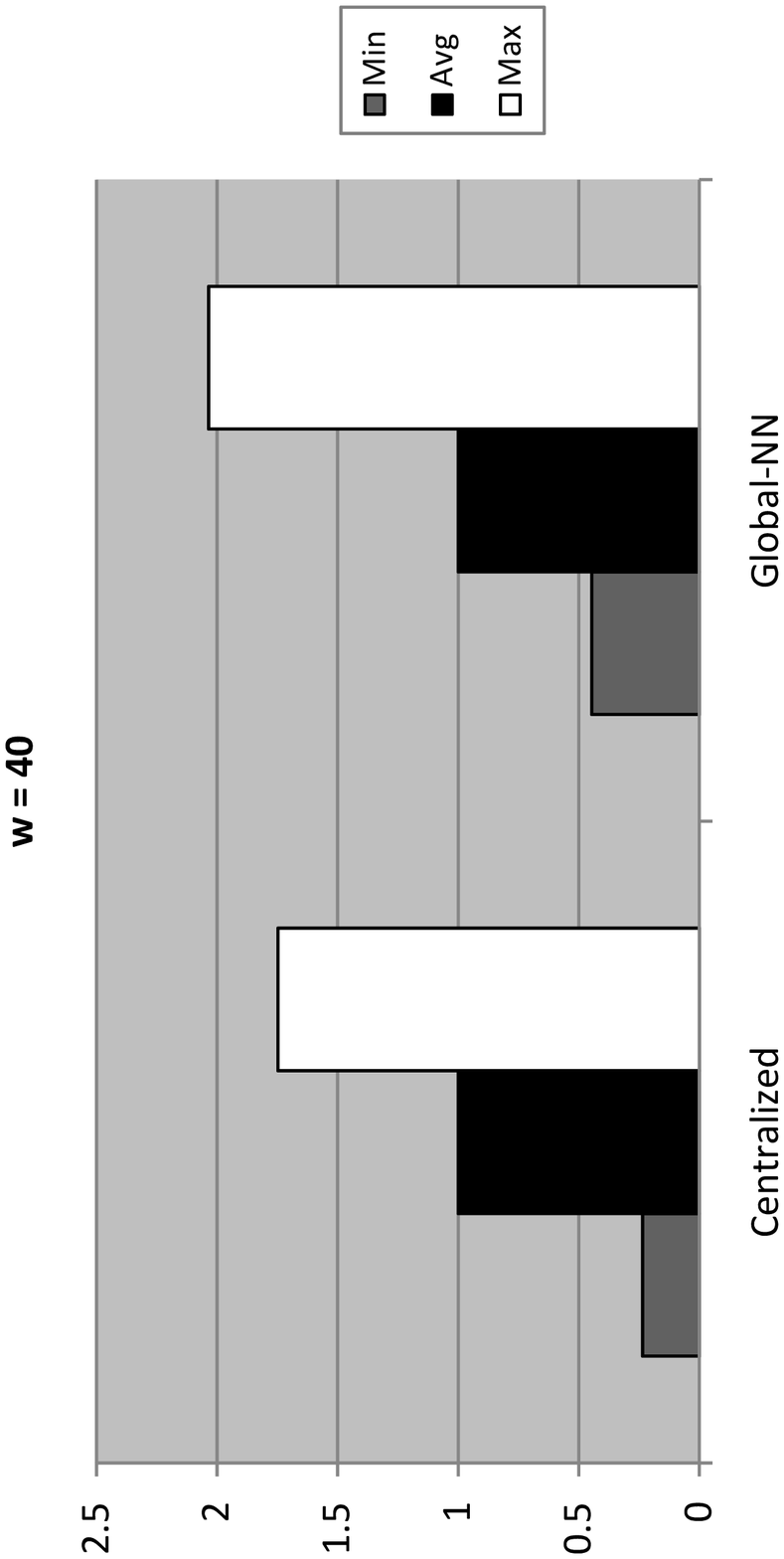}
\caption{Normalized average, minimum, and maximum amount of energy consumed by a node
for global outlier detection.}
\label{fig:min_avg_max_norm}
\end{center}
\end{figure}

For the partial information for $w$=40, the normalized range of energy consumption is actually lower
for the centralized algorithm than for the distributed one. However, referring
back to Figure \ref{fig:min_avg_max}, the average
energy consumption for a node in the centralized case is much higher than that
for the distributed case. Hence, in this case, the normalized maximum value
does not convey the full picture of energy quality of the compared algorithms.

\begin{figure}[t]
\begin{center}
\includegraphics[width=7cm]{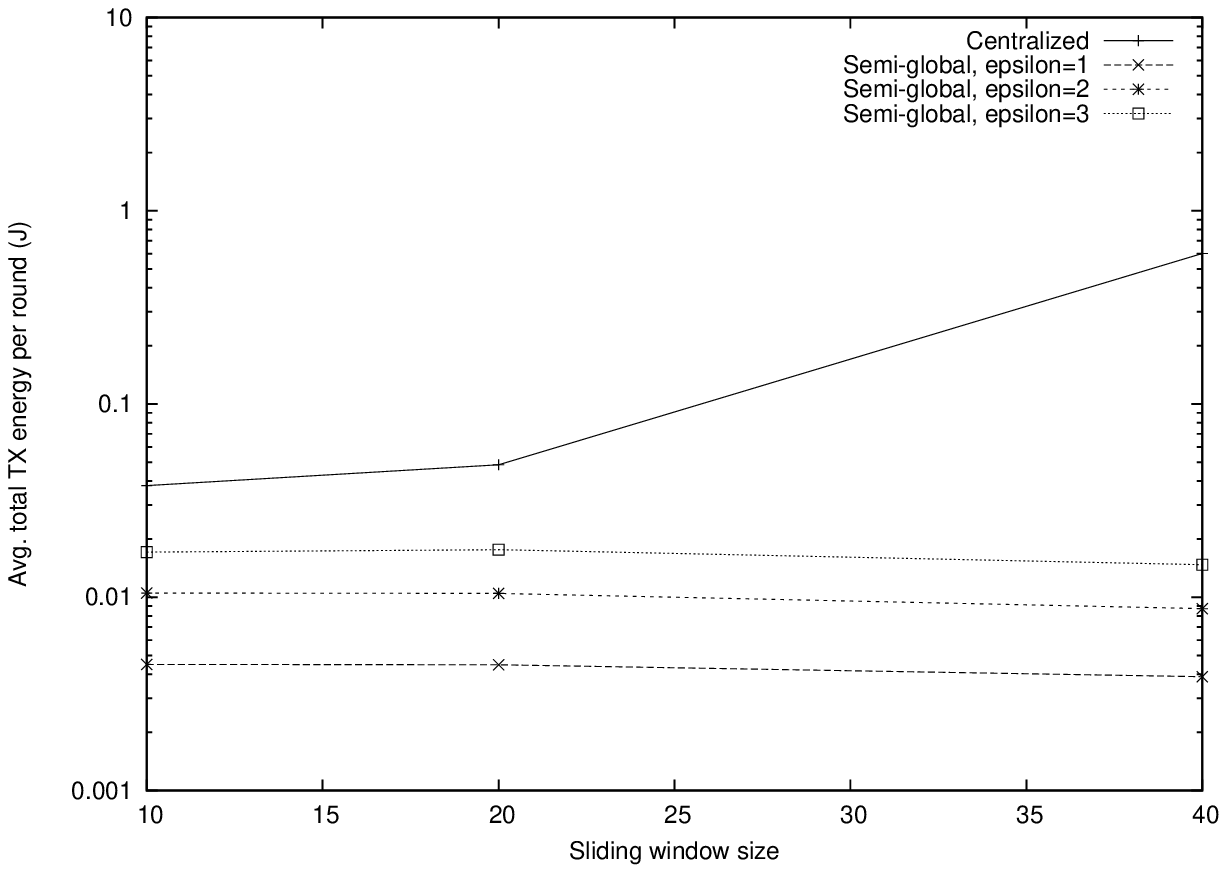}
\includegraphics[width=7cm]{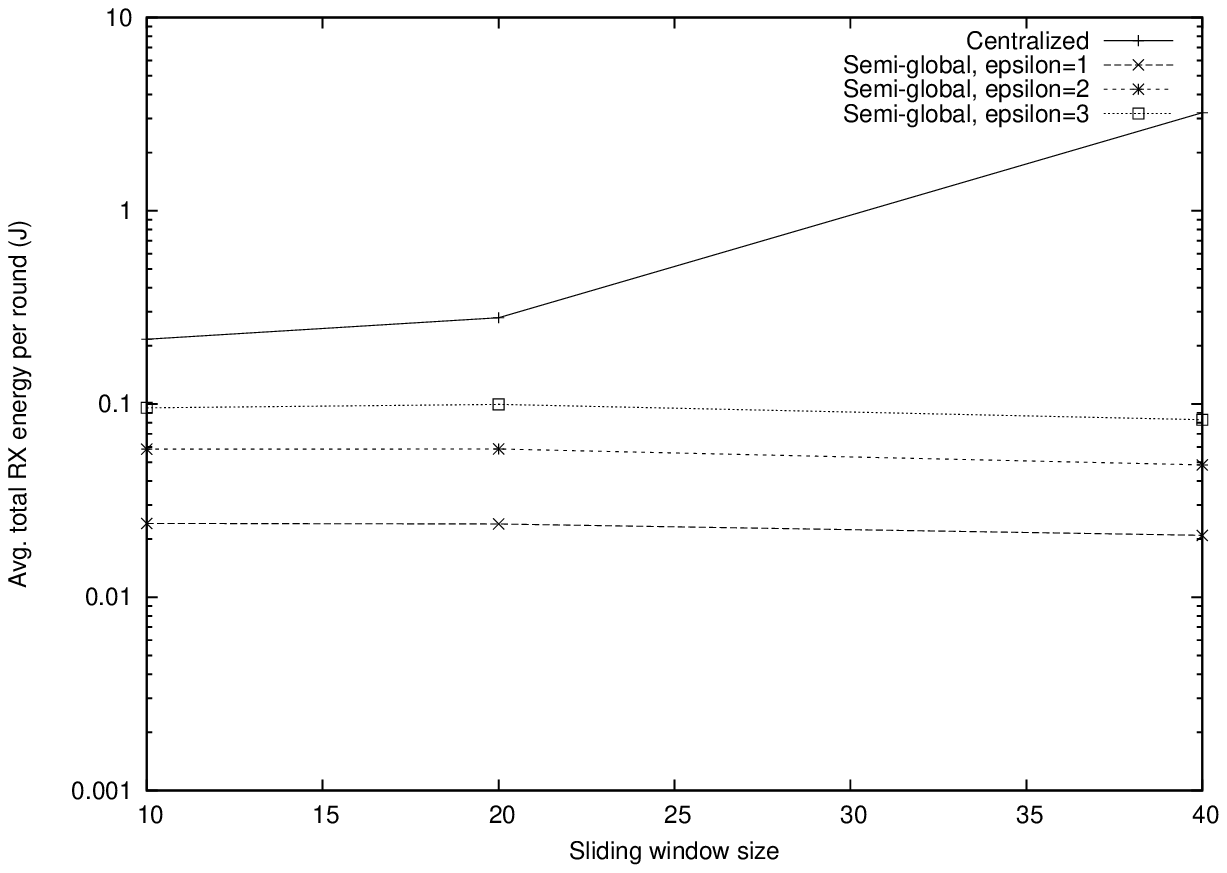}
\caption{Average transmission and receive energy consumed per node per sample interval vs. $w$ ($n$=4)
for localized outlier detection using nearest neighbor outlier detection.}
\label{fig:local1_NN}
\end{center}
\end{figure}

\begin{figure}[t]
\begin{center}
\includegraphics[width=7cm]{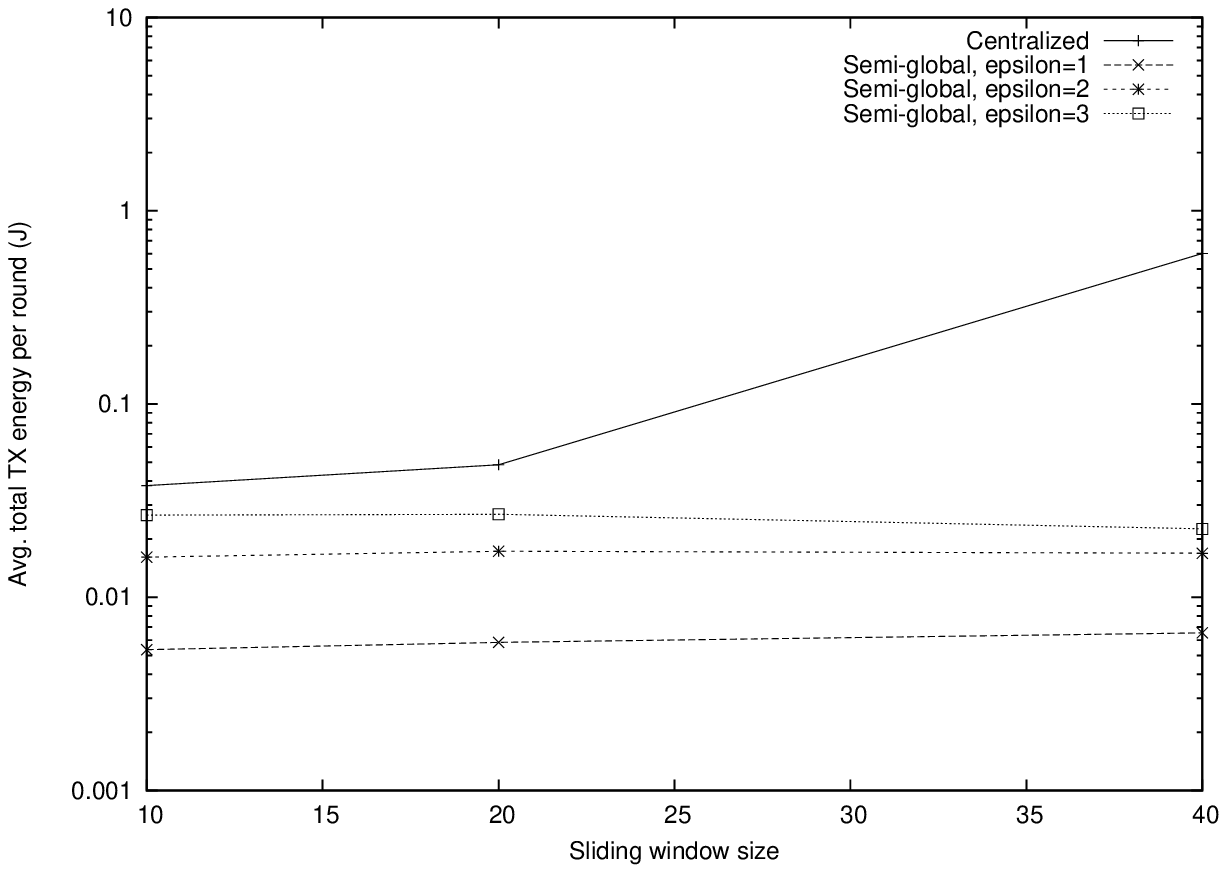}
\includegraphics[width=7cm]{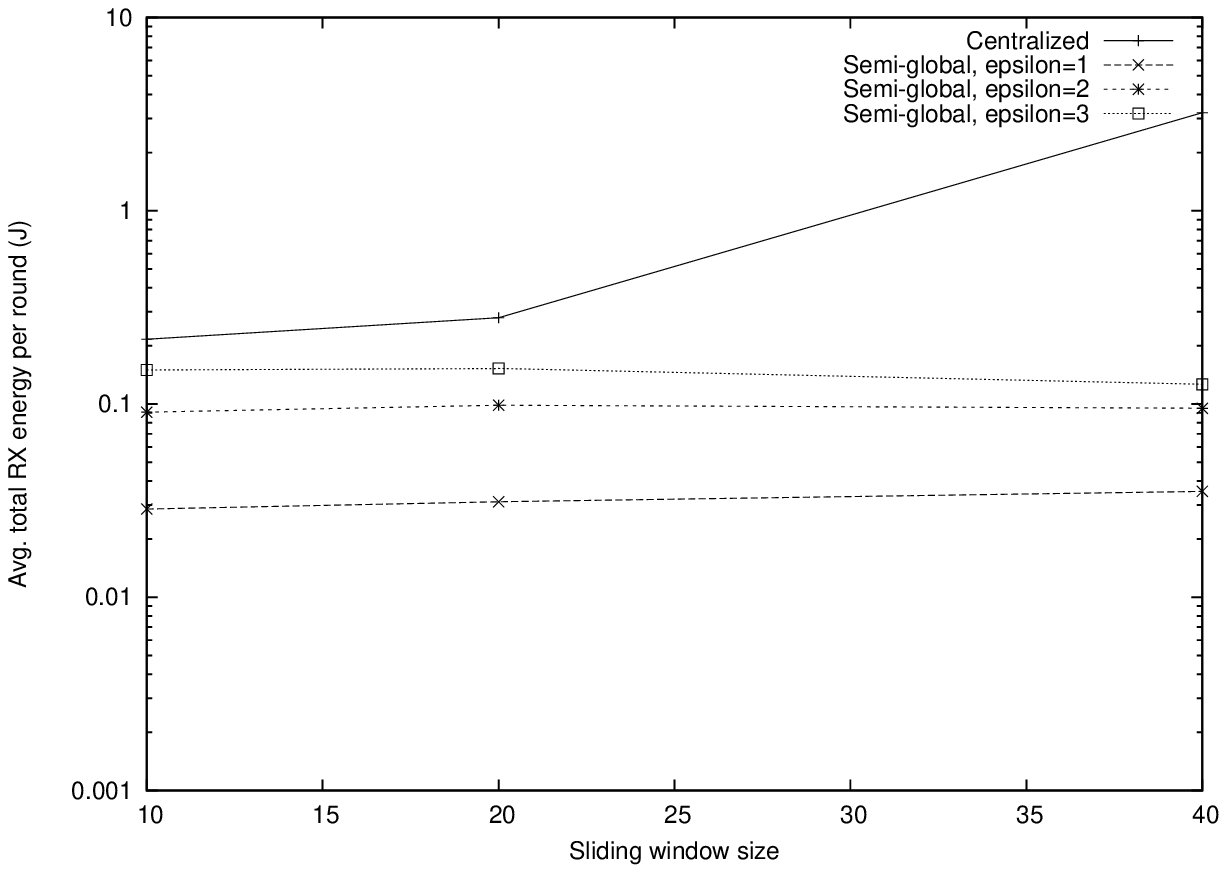}
\caption{Average transmission and receive energy consumed per node per sample interval vs. $w$ ($n$=4, $k$=4)
for localized outlier detection using $k$ nearest neighbor outlier detection.}
\label{fig:local1_KNN}
\end{center}
\end{figure}

The plots in Figure \ref{fig:local1_NN} compare the
rate of energy usage between the centralized algorithm and the distributed algorithm
for localized outlier detection. Since the results of using NN and KNN
outlier detection methods are nearly identical, only results for the former are shown.
Again, the centralized algorithm uses much more energy than the distributed algorithms.
Regarding the distributed localized algorithms, the rate of energy usage increases
along with the values of epsilon. This is expected since as epsilon increases, so
does the message passing overhead as data points travel farther from their place
of origin. The behavior of the distributed algorithm in the localized case for
nearest neighbor outlier detection is similar to that of global case for the same
detection method. Energy usage generally decreases as $w$ increases. As before, we
attribute this behavior to the increasing amount of data redundancy as the size of
the sliding window increases. In general, the extent of the spatial area over which
outliers are defined affects the energy usage trends of the algorithm, but not by
a significant amount.

\subsubsection{Effect of the number of reported outliers}

\begin{figure}[t]
\begin{center}
\includegraphics[width=7cm]{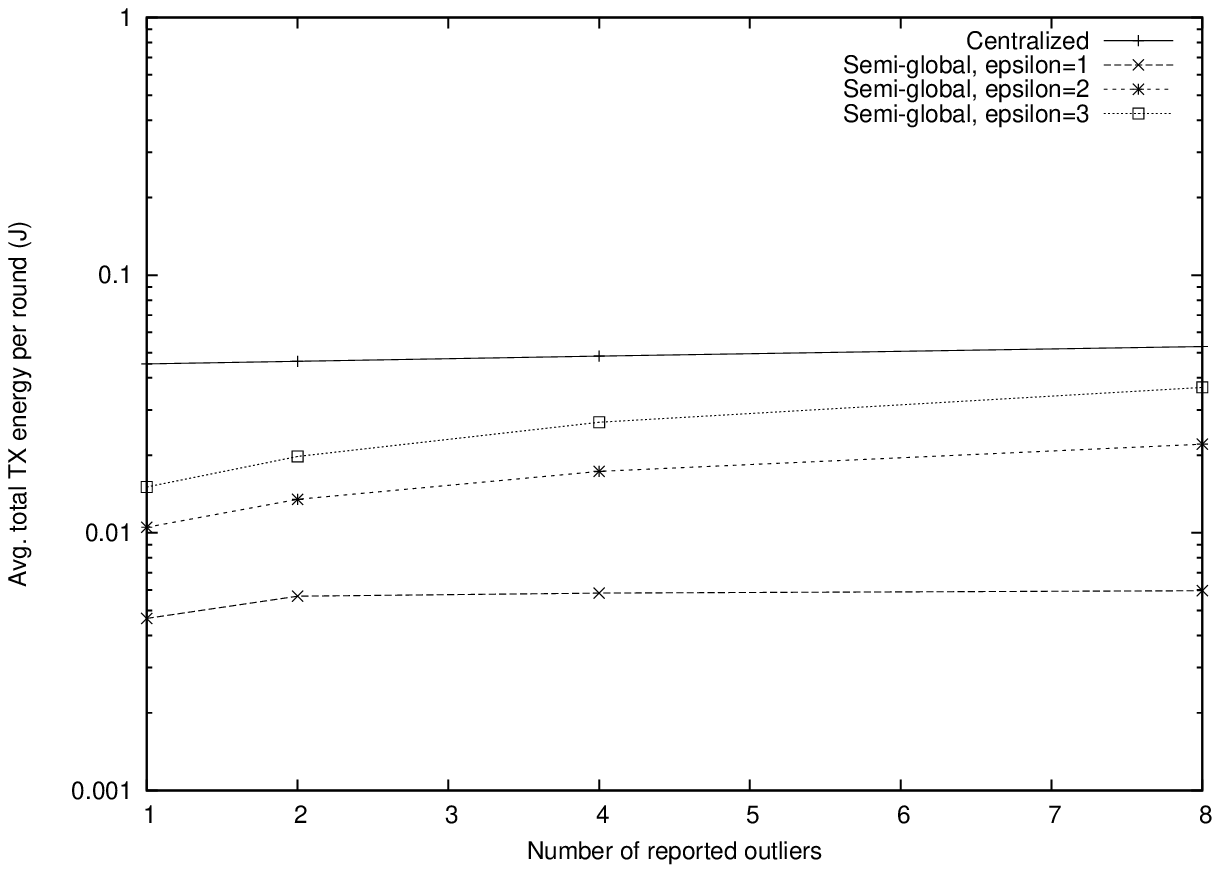}
\includegraphics[width=7cm]{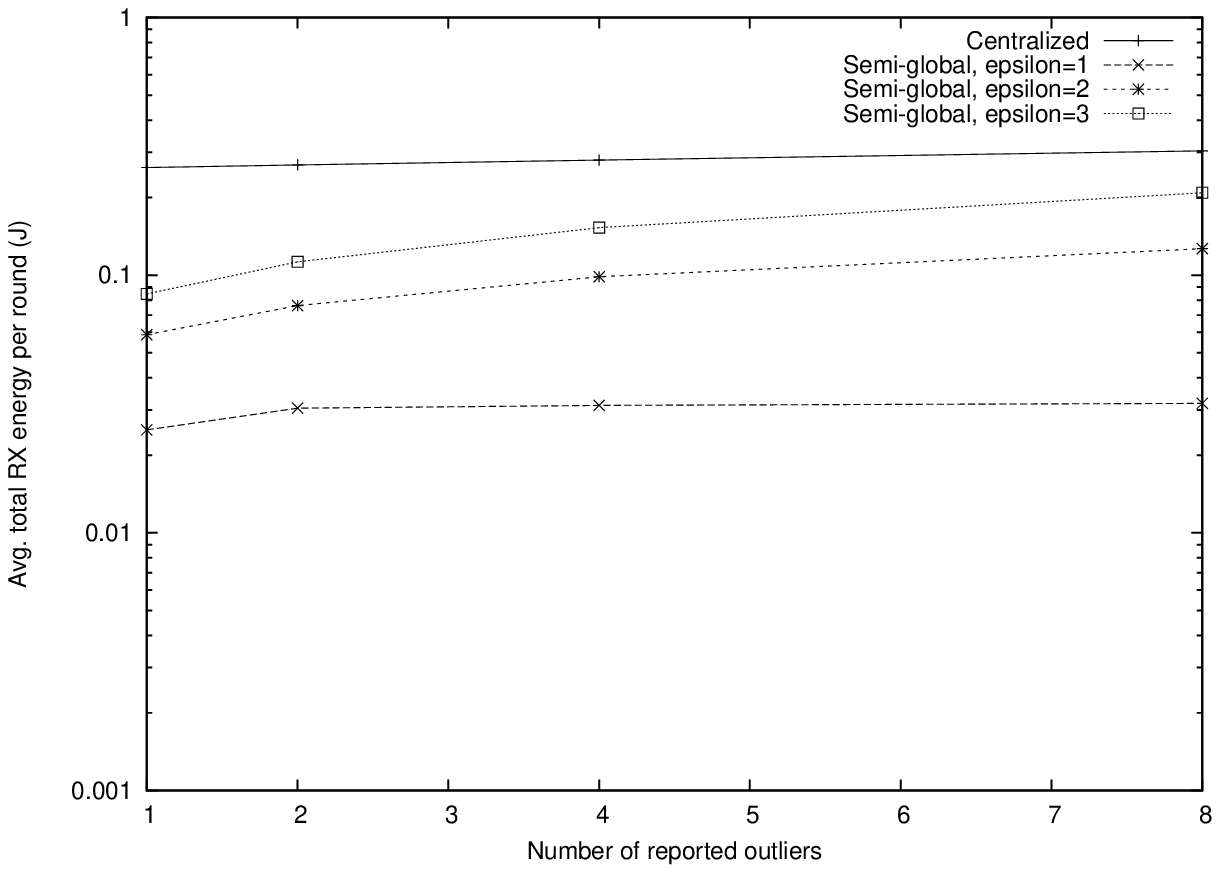}
\caption{Average transmission and receive energy consumed per node per sample interval vs. $n$ ($w$=20, $k$=4)
for localized outlier detection using k nearest neighbor outlier detection.}
\label{fig:local2_KNN}
\end{center}
\end{figure}

We now investigate how the number of outliers produced affects energy usage.
Figure \ref{fig:local2_KNN} shows the plots illustrating the performance of the localized
outlier detection algorithms under increasing values for $n$ for KNN
outlier detection. Similar plots for NN detection are omitted due to space restrictions
and similarity of results; NN detection is negligibly less energy efficient most likely due to a
lower rate of convergence. The energy usage
trends for these algorithms are straightforward and expected. Energy usage increases along
with both $n$ and $epsilon$, which both cause more message passing overhead with
increasing value. We also noticed that the rate at which energy usage increased
was related to $epsilon$. This is expected since the compounded effects of larger
$epsilon$ and $n$ values should make a more noticeable mark on how energy is used.


\section{Conclusions}

We addressed the problem of unsupervised outlier detection in WSNs. We developed a solution 
that

\begin{enumerate}
\item allows flexibility in the heuristic used to define outliers,

\item computes the result in-network to reduce both bandwidth and energy usage,

\item only uses single hop communication thus permitting very simple node failure detection
and message reliability assurance mechanisms (e.g., carrier-sense), and

\item  seamlessly accommodates dynamic updates to data.
\end{enumerate}

We evaluated the outlier detection algorithm's behavior on real-world sensor 
data using a simulated wireless sensor network. These initial results show 
promise for our algorithm in that it outperforms a strictly centralized 
approach under some very important circumstances. When the unabridged data 
from the entire sensor network are sent to a single location, the node 
collecting this data as well as its nearest neighbors become a bottleneck
of the entire system. Indeed, the density of traffic in this region is
proportional to the area of coverage of the entire network while the average
node has the traffic density proportional to the area covered by its 
communication range. In the example that we simulated in the paper, the
traffic in the area of the collecting node was about 50 times more dense than
in the other parts of the network. The immediate consequence is the shorter
life-time of the network, as the nodes near the collecting point will die
because of battery exhaustion when many remaining nodes will use just 2\% of 
their energy. The second consequence is the congestion of the traffic that 
either results in a lot of interference necessitating retransmissions or
delays or, alternatively, in delays imposed by a multi-slot bandwidth sharing 
scheme needed to avoid transmission interference. In short, using the 
centralized algorithm with its drastic imbalance of the traffic density 
will put even the best routing protocols under the sever stress. In contrast, 
our distributed and localized outlier detection algorithms avoid these
difficulties.

Our approach is well suited for applications in which the confidence of
an outlier rating may be calculated by either an adjustment of sliding 
window size or the number of neighbors used in a distance-based outlier 
detection technique. We assert that these applications are critical for 
resource-constrained sensor networks for two reasons. First, communication 
is a costly activity motivating the need for only the most accurate data 
to be transmitted to a client application. Second, emerging safety-critical 
applications that utilize wireless sensor networks will require the most 
accurate data, including outliers. This work represents
our contribution toward enabling efficient data cleaning solutions for these
types of applications.

\begin{acknowledgements}
The authors thank the U.S. National Science Foundation for support of Wolff, 
Giannella, and Kargupta through award IIS-0329143 and CAREER award 
IIS-0093353 and of Szymanski through award OISE-0334667. Research of Branch and Szymanski continued through participation in the
International Technology Alliance sponsored by the U.S. Army Research
Laboratory and the U.K. Ministry of Defense.
The authors thank Chris Morrell at Rensselaer Polytechnic Institute 
for his efforts in helping to obtain performance metrics and Wesley Griffin
at UMBC for his help in running simulations.
The authors also thank Samuel Madden at 
Massachusetts Institute of Technology and the team at the Intel Berkeley 
Research Lab for generating the sensor data used in this paper and assisting 
in its use. The content of this paper does
not necessarily reflect the position or policy of the U.S. Government or the MITRE Corporation ---no 
official endorsement should be inferred or implied.
C. Giannella completed this work primarily while in the Department of Computer Science, New Mexico State
University. 

\end{acknowledgements}

\bibliographystyle{spmpsci}      

\bibliography{Outliers_Journal}

\section{Appendix: Correctness Proofs for the Global Outlier Detection Algorithm} \label{appendix:proofs}

In this section, we provide detailed proofs of Theorems \ref{thm:termination} and \ref{thm:correctness}.
Before doing so, a few technical lemmas are needed.
The first two isolate a couple of
useful properties following from the axioms of $R$.

\begin{lem}
\label{lem:3}
For any $P \subseteq Q \subseteq D$ where $|P| \geq n$, 
if $O_{n}(P)$ $\neq$ $O_{n}(Q)$,
then there exists $x \in O_{n}(P)$ such that $R(x,P)$ $>$ 
$R(x,Q)$.
\end{lem}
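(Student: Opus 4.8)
The plan is to argue by contraposition. By the anti-monotonicity axiom applied to $P \subseteq Q$ we always have $R(x,P) \ge R(x,Q)$, so the negation of the conclusion is exactly the assertion that $R(x,P) = R(x,Q)$ for every $x \in O_n(P)$; under this assumption I would prove $O_n(P) = O_n(Q)$. Since $|P| \ge n$, and hence $|Q| \ge n$ as well, both outlier sets have exactly $n$ elements, so it suffices to show $O_n(P) \subseteq O_n(Q)$, i.e.\ that each $x \in O_n(P)$ still lies among the top $n$ points of $Q$ with respect to $R(\cdot,Q)$.

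The first step is an internal separation of $P$. Because $\prec$ makes $R(\cdot,P)$ one-to-one and $O_n(P)$ is the top-$n$ set of $P$, for every $x \in O_n(P)$ and every $y \in P \setminus O_n(P)$ we have $R(x,P) > R(y,P)$, hence $R(y,Q) \le R(y,P) < R(x,P) = R(x,Q)$. So, measured by $R(\cdot,Q)$, every element of $O_n(P)$ strictly outranks every element of $P \setminus O_n(P)$. Fixing $x \in O_n(P)$, this means the only points of $P$ that can outrank $x$ in $Q$ are the at most $n-1$ points of $O_n(P)$ lying above $x$, so $x$ can fall out of the top $n$ of $Q$ only because of points in $Q \setminus P$.

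Handling the new points $Q \setminus P$ is the step I expect to be the genuine obstacle. Suppose $O_n(P) \ne O_n(Q)$ and pick $z \in O_n(Q) \setminus O_n(P)$. If $z \in P$ the argument closes quickly: $z \notin O_n(P)$ forces some $x \in O_n(P)$ to outrank $z$ in $P$, while $z \in O_n(Q)$ together with $x \notin O_n(Q)$ (for a suitably chosen such $x$) forces $z$ to outrank $x$ in $Q$; combined with $R(x,P) = R(x,Q)$ and one-to-one-ness via $\prec$, these inequalities contradict each other, and this $x$ is the desired witness. The case in which every element of $O_n(Q) \setminus O_n(P)$ comes from $Q \setminus P$ is the delicate one: here I would invoke the smoothness axiom, building $Q$ up from $P$ one new point at a time, using that for $x \in O_n(P)$ the non-increasing chain $R(x,P) = R(x,P_0) \ge R(x,P_1) \ge \cdots \ge R(x,Q)$ is forced to be constant, and arguing that any change in $x$'s rank would have to be witnessed by a single-point extension, contradicting the constancy of its rating. (In the form of the lemma actually needed for Theorem~\ref{thm:termination} one additionally knows $O_n(Q) \subseteq P$, which makes this last case vacuous and the two-line argument above sufficient.) Pinning down this reduction and keeping the tie-breaking consistent throughout is where the real work lies; the counting in the second paragraph is routine.
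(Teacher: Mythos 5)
Your contrapositive set-up, the counting argument, and your ``quick'' sub-case (some $z \in O_n(Q)\setminus O_n(P)$ lies in $P$) are all correct, and that sub-case is in fact the entirety of the paper's proof: the paper picks $x \in O_n(P)\setminus O_n(Q)$ and $y \in O_n(Q)\setminus O_n(P)$ and chains $R(x,P) > R(y,P) \geq R(y,Q) > R(x,Q)$. You have also put your finger on exactly the weak point of that chain: the first inequality is justified ``by definition of $O_n(\cdot)$,'' which only compares $x$ against points \emph{of $P$} outside $O_n(P)$, so it is valid only when $y \in P$. The paper never addresses the case $O_n(Q)\setminus O_n(P) \subseteq Q\setminus P$ that you correctly isolate as the genuine obstacle.

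However, your proposed resolution of that case via smoothness does not work. Smoothness constrains how $R(x,\cdot)$ changes as points are added to the reference set; it says nothing about the ratings of the newly added points themselves, and it is precisely a new point with a \emph{large own rating} that can evict $x$ from the top $n$ while $R(x,\cdot)$ stays constant. Concretely, take $R$ to be nearest-neighbor distance, $n=1$, $P=\{0,10\}$, $Q=\{0,10,100\}$: the nearest neighbor of $0$ and of $10$ is unchanged in passing to $Q$, so $R(x,P)=R(x,Q)$ for every $x \in O_1(P)$, yet $O_1(Q)=\{100\} \neq O_1(P)$. Your non-increasing chain $R(x,P_0)\ge\cdots\ge R(x,Q)$ really is constant and there is nothing to contradict --- the statement is simply false in this case, so no argument can close it. Your parenthetical is the right repair: the lemma should carry the extra hypothesis $O_n(Q)\subseteq P$ (or, more weakly, $(O_n(Q)\setminus O_n(P))\cap P\neq\emptyset$), which the algorithm does supply for the invocation in Theorem~\ref{thm:termination}, since upon termination $O_n(P_i)\subseteq D^i_{i,j}\cup D^i_{j,i}$. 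Be aware, though, that the invocation in Theorem~\ref{thm:correctness} takes $P=P_i$ and $Q=D$, where $O_n(D)\subseteq P_i$ is not known in advance, so the restricted lemma does not immediately suffice there and an additional argument is needed.
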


\begin{proof}
Assume $O_{n}(P)$ $\neq$ $O_{n}(Q)$.
Since $|O_{n}(P)|$ $=$ $|O_{n}(Q)|$ $=$ $n$, then
there exists $x \in (O_{n}(P)\setminus O_{n}(Q))$ and
$y \in (O_{n}(Q)\setminus O_{n}(P))$.  Recall that we assume a tie-breaking
machanism is used to ensure $R(.P)$ and $R(.,Q)$ are one-to-one.  Thus,
by definition of $O_n(.)$ it follows that $R(x,P)$ $>$ $R(y,P)$ and $R(y,Q)$ $>$ $R(x,Q)$.  
The anti-monotoncity axiom implies $R(y,P)$ $\geq$ $R(y,Q)$ yeilding the desired result.
\qed
\end{proof}

\begin{lem}
\label{lem:1}
For any $P \subseteq D$, $x \in O_{n}(P)$, and $z \in P$, we have
$R(x,P)$ $=$ $R(x,[P|O_{n}(P)])$ $=$ $R(x,[P|O_n(P)] \cup \{z\})$.
\end{lem}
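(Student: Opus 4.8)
The plan is to derive both equalities from the anti\-monotonicity axiom alone, using the defining property of the minimal support set $[P|x]$ (namely $R(x,P) = R(x,[P|x])$) as the anchor; the smoothness axiom is not needed here.

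First I would record the inclusion chain $[P|x] \subseteq [P|O_{n}(P)] \subseteq P$. The left inclusion holds because $x \in O_{n}(P)$, so $[P|x]$ is one of the sets whose union defines $[P|O_{n}(P)]$; the right inclusion holds because each $[P|y] \subseteq P$. Applying anti\-monotonicity along this chain gives $R(x,[P|x]) \geq R(x,[P|O_{n}(P)]) \geq R(x,P)$. Since $[P|x]$ is a support set of $x$ over $P$, the two outer terms are equal, so the middle term is squeezed to the same value; in particular $R(x,P) = R(x,[P|O_{n}(P)])$, which is the first claimed equality.

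For the second equality, note that $z \in P$ gives a second chain $[P|O_{n}(P)] \subseteq [P|O_{n}(P)] \cup \{z\} \subseteq P$. Anti\-monotonicity yields $R(x,[P|O_{n}(P)]) \geq R(x,[P|O_{n}(P)] \cup \{z\}) \geq R(x,P)$, and since the two outer terms were just shown to coincide, the middle term equals them as well, giving $R(x,[P|O_{n}(P)]) = R(x,[P|O_{n}(P)] \cup \{z\})$.

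There is essentially no hard step: the only things to verify carefully are the two set inclusions (in particular that $x \in O_{n}(P)$ is exactly the hypothesis needed to get $[P|x] \subseteq [P|O_{n}(P)]$, and that it remains valid in the degenerate case $|P| < n$ where $O_{n}(P) = P$), after which both equalities follow from a single ``sandwich'' between a support set of $x$ and $P$ itself. Note that the one\-to\-one assumption on $R(.,Q)$ is not used in this lemma.
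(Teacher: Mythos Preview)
Your proof is correct and follows essentially the same approach as the paper: the paper uses the single inclusion chain $[P|x] \subseteq [P|O_n(P)] \subseteq [P|O_n(P)] \cup \{z\} \subseteq P$ together with anti-monotonicity and $R(x,P)=R(x,[P|x])$ to sandwich all three values at once, whereas you carry out the same sandwich in two steps. The content is identical.
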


\begin{proof}
Since, by definition, $[P|x] \subseteq [P|O_n(P)] \subseteq 
([P|O_n(P)] \cup \{z\}) \subseteq P$, then by the
anti-monotonicity axiom it follows that

\begin{eqnarray*}
R(x,P) &=& R(x,[P|x]) \\
&\geq& R(x,[P|O_n(P)]) \\
&\geq& R(x,[P|O_n(P)] \cup \{z\}) \\
&\geq& R(x,P).
\end{eqnarray*} 
\qed
\end{proof}

The last technical lemma shows that once a sensor $p_i$ completes its
local computation, then $(D^i_{i,j} \cup D^i_{j,i})$
contains a particular crucial set of points (among others) needed for consistency among 
sensors' outlier estimates.

\begin{lem}
\label{lem:2}
For any $p_i$, once the main for-loop in the algorithm completes,  
$[P_i|O_{n}(D^i_{i,j} \cup D^i_{j,i})]$ 
$\subseteq$ $(D^i_{i,j} \cup D^i_{j,i})$.
\end{lem}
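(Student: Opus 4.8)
The plan is to fix a sensor $p_i$ and a neighbor $j \in \Gamma_i$ and to track how the pair $(D^i_{i,j}, D^i_{j,i})$ evolves during one execution of the main for-loop. First I would note that $P_i$ and $D^i_{j,i}$ are modified only by the update step that precedes the for-loop, while the body of the for-loop changes $D^i_{i,j}$ only during the single iteration that processes $j$ itself, and never changes $D^i_{j,i}$. Consequently the value of $D^i_{i,j} \cup D^i_{j,i}$ at the moment the for-loop completes equals its value immediately after that iteration, so it suffices to prove the inclusion at that moment. Since $D^i_{i,j}, D^i_{j,i}, Z_j \subseteq P_i$ throughout, every application of $[P_i|\cdot]$ below is well defined.

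Next I would extract the invariant left behind by the ``Repeat until no change'' step. Because $P_i$ is finite and $Z_j$ only grows inside $P_i$, the repeat loop terminates, and at termination $[P_i|O_n(D^i_{i,j} \cup D^i_{j,i} \cup Z_j)] \subseteq Z_j$, where $D^i_{i,j}$ here denotes its value \emph{before} the subsequent ``If\dots then'' block; together with the initialization $Z_j \supseteq O_n(P_i) \cup [P_i|O_n(P_i)]$ this is exactly the fixed-point condition (\ref{eq:fixed-point}).

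Then I would compute the effect of the ``If\dots then'' block. Let $A$ denote the value of $D^i_{i,j} \cup D^i_{j,i}$ just before that block. The block replaces $D^i_{i,j}$ by $D^i_{i,j} \cup (Z_j \setminus A)$ when $Z_j \setminus A \neq \emptyset$, and leaves it unchanged otherwise, a case which fits the same formula since then $Z_j \subseteq A$. In either case the updated value of $D^i_{i,j} \cup D^i_{j,i}$ is $A \cup (Z_j \setminus A) = A \cup Z_j$. Combining this with the invariant of the previous paragraph yields $[P_i|O_n(A \cup Z_j)] \subseteq Z_j \subseteq A \cup Z_j$, which is precisely the asserted inclusion for the updated $D^i_{i,j} \cup D^i_{j,i}$; by the first paragraph it persists until the for-loop completes.

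I do not expect a substantive obstacle here; the only delicate point is bookkeeping, namely keeping straight that the set inside $O_n(\cdot)$ in (\ref{eq:fixed-point}) refers to the pre-update $D^i_{i,j}$, that the post-update union $A \cup Z_j$ is literally the same set of points, and that no later iteration of the for-loop (which touches only $D^i_{i,k}$ for $k \neq j$) can disturb $D^i_{i,j} \cup D^i_{j,i}$.
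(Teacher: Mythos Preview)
Your proposal is correct and follows essentially the same route as the paper's proof: both arguments record that the ``Repeat until no change'' loop terminates with the fixed-point inclusion $[P_i|O_n(D^i_{i,j}(\text{before}) \cup D^i_{j,i} \cup Z_j)] \subseteq Z_j$, then observe that the post-update $D^i_{i,j} \cup D^i_{j,i}$ equals $D^i_{i,j}(\text{before}) \cup D^i_{j,i} \cup Z_j$, from which the claimed containment is immediate. Your write-up is slightly more careful than the paper's in explicitly arguing that later iterations of the outer for-loop (those processing $k \neq j$) do not touch $D^i_{i,j}$ or $D^i_{j,i}$, so the inclusion persists until the for-loop completes.
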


\begin{proof}
Let $D^i_{i,j}(before)$ denote the set of
points held by $p_i$ and sent from $p_i$ to $p_j$ immediately before the execution of the
``Repeat unil no change: ...'' step in the main for loop for $j \in \Gamma_i$.  For $\ell \geq 1$,
let $Z_j(\ell)$ denote $Z_j$ immediately before the $\ell^{th}$ iteration in the 
execution of the ``Repeat until no change ...'' step in the main for loop for $j \in \Gamma_i$, {\em e.g.}
$Z_j(1)$ $=$ $O_n(P_i)$ $\cup$ $[P_i|O_n(P_i)]$.  

By definition, $Z_j(\ell)$ $\subseteq$ $Z_j(\ell+1)$ $\subseteq$ $P_i$ and $P_i$ is finite.  Thus, let 
$\ell^*$ denote the smallest integer such that $Z_j(\ell^*-1)$ $=$ $Z_j(\ell^*)$.  Hence,
the ``Repeat until no change ...'' step terminates at the end of iteration $\ell^*$ and $Z_j = Z_j(\ell^*)$
in the remainder of the main for loop.  Therefore,

$$Z_j(\ell^*) = Z_j(\ell^*) \cup [P_i|O_n(D^i_{i,j}(before) \cup D^i_{j,i} \cup Z_j(\ell^*))]$$

\noindent and

$$D^i_{i,j} \cup D^i_{j,i} = D^i_{i,j}(before) \cup D^i_{j,i} \cup Z_j(\ell^*).$$

\noindent It follows that $[P_i|O_n(D^i_{i,j} \cup D^i_{j,i})] \subseteq Z_j(\ell^*) \subseteq (D^i_{i,j} \cup D^i_{j,i})$.  
\qed
\end{proof}

\noindent Now we prove that upon termination of the algorithm, the sensors' estiamtes are 
consistent.

\textit{Theorem}\ref{thm:termination}
Assuming a connected network, if for all sensors $p_i$: $D_i$ and $\Gamma_i$ do not change, then upon termination of 
the algorithm all sensors' outlier estimates and supports agree: for all $p_i, p_j$: 
(i) $O_n(P_i) = O_n(P_j)$ and (ii) $[P_i|O_n(P_i)] = [P_j|O_n(P_j)]$.

\begin{proof} 
Since the network is connected, we may assume, without loss of generality, that $p_i$ and $p_j$
are neighbors.  
To prove part (i), we will show that $O_{n}(P_i)$ $=$ 
$O_{n}(D^i_{i,j} \cup D^i_{j,i})$ $=$
$O_{n}(D^j_{i,j} \cup D^j_{j,i})$ $=$
$O_{n}(P_j)$.  The middle equality follows from the fact that
$(D^i_{i,j} \cup D^i_{j,i})$ $=$
$(D^j_{i,j} \cup D^j_{j,i})$.  By symmetry,
it suffices to show the first equality.  

Suppose 
$O_{n}(P_i)$ $\neq$ 
$O_{n}(D^i_{i,j} \cup D^i_{j,i})$.  The following 
contradiction is reached.  There exists
$x \in O_{n}(D^i_{i,j} \cup D^i_{j,i})$ such that

\begin{eqnarray*}
R(x,D^i_{i,j} \cup D^i_{j,i}) &>& R(x,P_i) \\
&=&  R(x,[P_i|x]) \\
&\geq& R(x,[P_i|O_{n}(D^i_{i,j} \cup D^i_{j,i})]) \\
&\geq& R(x,D^i_{i,j} \cup D^i_{j,i}).
\end{eqnarray*}

\noindent The first inequality follows from Lemma \ref{lem:3} (with
$P$ $=$ $(D^i_{i,j} \cup D^i_{j,i})$ and $Q$ $=$ 
$P_i$).  The equality follows from the definition of support $[.|.]$.  The last two
inequalities follow from the anti-monotonicity of $R$
and Lemma \ref{lem:2}.  

To prove part (ii) $[P_i|O_n(P_i)] = [P_j|O_n(P_j)]$,
it suffices to show that for any $x \in$ 
$O_{n}(D^i_{i,j} \cup D^i_{j,i})$, it is the
case that $[P_i|x]$ $=$ $[P_j|x]$. This is because $O_{n}(P_i)$ $=$
$O_{n}(D^i_{i,j} \cup D^i_{j,i})$ $=$ 
$O_{n}(P_j)$.  We will prove that $[P_i|x]$ $=$ $[P_i \cap P_j|x]$ $=$ $[P_j|x]$.
By symmetry it is enough to show the first equality.

From Lemma \ref{lem:2} it follows that 

\begin{eqnarray*}
[P_i|x] &\subseteq& [P_i|O_{n}(D^i_{i,j} \cup D^i_{j,i})] \\
&\subseteq& (D^i_{i,j} \cup D^i_{j,i}) \\
&\subseteq& (P_i \cap P_j).
\end{eqnarray*}

\noindent Thus, anti-monotonicity implies
$R(x,P_i)$ $\geq$ $R(x,[P_i|x])$ $\geq$ $R(x,P_i \cap P_j)$ $\geq$ $R(x,P_i)$, and so,

\begin{eqnarray*}
R(x,P_i) = R(x,[P_i|x]) = R(x, P_i \cap P_j).
\end{eqnarray*}

Therefore, $[P_i|x]$, $[P_i \cap P_j|x]$ are support sets of $x$ with respect to $P_i \cap P_j$ and $P_i$.
Since $[P_i|x]$ ($[P_i \cap P_j|x]$) is the {\em unique} smallest support set for $x$ with respect to
$P_i$ ($P_i \cap P_j$), then it follows that $[P_i|x]$ $=$ $[P_i \cap P_j|x]$.  
\qed
\end{proof}

\noindent Finally, we prove that upon termination the sensors' estiamtes
are equal to the correct answer.

\textit{Theorem}\ref{thm:correctness}
Assuming a connected network, if for all sensors $p_i$: $D_i$ and $\Gamma_i$ do not change, then upon termination of the 
algorithm, all sensors' outlier estimate will be correct: for all $p_i$: 
$O_n(P_i) = O_n(D)$.

\begin{proof}
Suppose there exists a sensor $p_i$ such that $O_{n}(P_i)$ $\neq$ 
$O_{n}(D)$.  By Lemma \ref{lem:3} (with $P$ $=$ $P_i$ and $Q$ $=$ $D$), there
exists $x \in O_{n}(P_i)$ such that $R(x,P_i)$ $>$ $R(x,D)$.
Moreover, the first equality in Lemma 
\ref{lem:1} (with $P = P_i$) implies that 
$R(x,[P_i|O_{n}(P_i)])$ $=$ $R(P_i,x)$.

Since $R(x,[P_i|O_{n}(P_i)])$ $>$ $R(x,D)$, then the smoothness axiom 
(with $Q_1 = [P_i|O_n(P_i)]$ and
$Q_2 = D$), 
implies there exists $z \in (D \setminus [P_i|O_{n}(P_i)])$ 

\begin{eqnarray*}
R(x,[P_i|O_{n}(P_i)]) > R(x,[P_i|O_{n}(P_i)]) \cup \{z\}).
\end{eqnarray*}

\indent This point $z$ must be contained in $P_j$ for some sensor $p_j$.
Hence, the inequality the following contradiction is reached.

\begin{eqnarray*}
R(x,[P_i|O_{n}(P_i)]) &>& R(x,[P_i|O_{n}(P_i)] \cup \{z\}) \\
&=& R(x,[P_i|O_{n}(P_j)] \cup \{z\}) \\
&=& R(x,[P_j|O_{n}(P_j)]) \\
&=& R(x,[P_i|O_{n}(P_i)]).
\end{eqnarray*}

\noindent The inequality above leads to the following contradiction.  The first
equality follows from Theorem \ref{thm:termination} part (i).  The 
middle equality follows from the second equality of Lemma \ref{lem:1} 
(with $P = P_j$ and noting that $O_{n}(P_j) = O_{n}(P_i)$ by Theorem 
\ref{thm:termination} part (i)).  The last equality follows from Theorem
\ref{thm:termination} part (ii).    
\qed
\end{proof}

\end{document}